\documentclass{article}
\usepackage{hyperref} 
\usepackage{Definitions}

\usepackage{fullpage}

\usepackage[numbers]{natbib}

\usepackage{authblk}
\author[1]{Ameet Gadekar\footnote{\url{ameet.gadekar@cispa.de}}}
\author[2]{Aristides Gionis\footnote{\url{argioni@kth.se}}}
\author[3]{Suhas Thejaswi\footnote{\url{firstname.lastname@aalto.fi}}}
\author[2]{Sijing Tu\footnote{\url{sijingtu93@gmail.com}}}
\affil[1]{CISPA Helmholtz Center for Information Security, Germany}
\affil[2]{KTH Royal Institute of Technology, Sweden}
\affil[3]{Aalto University, Finland}

\usepackage{lineno}

\usepackage[dvipsnames]{xcolor}

\newcommand{\bigO}{\ensuremath{\mathcal{O}}\xspace}

\newcommand{\BibTeX}{\rm B\kern-.05em{\sc i\kern-.025em b}\kern-.08em\TeX}

\date{}
\begin{document}

\title{Fair Committee Selection under Ordinal Preferences and Limited Cardinal Information}
\maketitle
\begin{abstract}

We study the problem of fair $k$-committee selection under an egalitarian objective. Given $n$ agents partitioned into $m$ groups (\eg, demographic quotas), the goal is to aggregate their preferences to form a committee of size $k$ that guarantees minimum representation from each group while minimizing the maximum \emph{cost} incurred by any agent.
We model this setting as the ordinal fair $k$-center problem, where agents are embedded in an unknown metric space, and each agent reports a complete preference ranking (\ie, ordinal information) over all agents, consistent with the underlying distance metric (\ie, cardinal information).
The cost incurred by an agent with respect to a committee is defined as its distance to the closest committee member.
The quality of an algorithm is evaluated using the notion of distortion, which measures the worst-case ratio between the the cost of the committee produced by the algorithm and  the cost of an optimal committee, when given complete access to the underlying metric space.

When cardinal information is not available, no constant distortion is possible for the ordinal $k$-center problem, even without fairness constraints,  when $k\geq 3$ [Burkhardt et.al., AAAI'24]. 
To overcome this hardness, we allow limited access to cardinal information by querying the metric space. In this setting, our main contribution is a factor-$5$ distortion algorithm that requires only $\bigO(k \log^2 k)$ queries. Along the way, we present an improved factor-$3$ distortion algorithm using 
$\bigO(k^2)$ 
queries.

\end{abstract} 

\def\thefootnote{*}\footnotetext{Authors are listed in alphabetical order. During this work, Suhas Thejaswi was employed at the Max Planck Institute for Software Systems, Germany, and part of this research was done while Suhas Thejaswi and Ameet Gadekar were visiting KTH Royal Institute of Technology, Sweden.}\def\thefootnote{\arabic{footnote}}

\section{Introduction}
In many collective decision-making tasks, a group of agents (\eg, voters) needs to select a subset of alternatives (\eg, candidates) that provides equitable outcomes for each agent while respecting additional demographic fairness constraints.
We consider this problem in the context of committee selection, which generalizes the single-winner selection problem to selecting multiple representatives.
Specifically, we assume that agents and alternatives are embedded in a metric space, where the distance between an agent and an alternative represents the cost (or dissatisfaction) that the agent experiences when that alternative is selected. 
Each agent can provide a complete ranking over the alternatives that is consistent with these underlying distances: if an agent ranks alternative $a$ above alternative $b$, then the distance from the agent to $a$ is no greater than the distance from the agent to $b$. 
The computational social choice literature refers to these preference rankings as ordinal information, as opposed to the explicit distance values, which are referred to as cardinal information. 
The goal is to select a committee (subset of alternatives) that minimizes the social cost, defined as a function of the distances of agents to their closest member in the selected committee. 

While ordinal rankings are readily available, the exact distance values are typically unknown to the algorithm.
A common way to evaluate the performance of a social choice algorithm that uses only ordinal information is through the notion of \emph{distortion}, first introduced by \citet{procaccia2006distortion}.
This concept was later extended to metric preferences by~\cite{anshelevich2018approximating}, which measures the quality of the solution produced by the algorithm, defined as the worst-case ratio between the social cost of the algorithm's solution and the social cost of the optimal solution that has full access to all pairwise distances.
In recent years, metric distortion has been a central tool for evaluating the performance of social choice algorithms in the single-winner setting~\cite{anshelevich2018approximating, munagala2019improved, gkatzelis2020resolving, kizilkaya2022plurality, kizilkaya2023generalized, 
charikar2022metric, fain2019random,
charikar2024breaking}, \ie, when the committee consists of a single member.

Recently, \citet{pulyassary2025constant} and \citet{burkhardt2024low} extend the metric distortion framework from single-winner selection to $k$-committee selection, for arbitrary $k$. 
Their works consider both the utilitarian objective, which minimizes the sum of distances of each agent to its nearest committee member (which corresponds to the $k$-median or $k$-means objective), and the egalitarian objective, which minimizes the maximum distance of an agent to its nearest committee member (which corresponds to the $k$-center objective).
\citet{burkhardt2024low} show that for $k \geq 3$, no algorithm can achieve a constant distortion for the ordinal $k$-committee selection problem with the egalitarian objective (i..e, when cardinal information is not available).
In light of this hardness result, both works additionally allow limited access to cardinal information through distance queries in order to obtain constant-factor distortion guarantees.

In this paper, we continue this line of work by studying a fair version of $k$-committee selection. 
Specifically, we study the problem of \emph{ordinal fair $k$-committee selection} with an emphasis on both \emph{equity} and \emph{fairness}. 
The equity principle ensures that no agent is unduly disadvantaged by the collective outcome; we encode this through the egalitarian objective, which minimizes the maximum distance from any agent to its nearest committee member.
To enforce fairness in the committee, we assume that the alternatives are partitioned according to an underlying demographic profile. The goal is to guarantee minimum representation from each demographic group in the selected committee. Numerous studies in social welfare highlight that such minimum representation can promote equity, inclusion, and informed policy decisions. For instance, quotas have been widely considered as mechanisms to secure diverse representation and balance opportunities across groups~\cite{barnes2020gender, croissant2019ethnic}. Our fairness constraints are designed to align with this notion of demographic fairness. Additionally, similar to~\cite{burkhardt2024low,cembrano2025metric}, we focus on the setting where alternatives and agents coincide.

Informally, the ordinal fair $k$-committee selection problem is defined as follows. 
We are given a set of agents embedded in an unknown metric space, partitioned into demographic groups. 
Each agent provides a complete ranking over all other agents that is consistent with the underlying distances in the metric space.
The goal is to select a committee of size $k$ using only a limited number of distance queries, subject to fairness constraints that ensure minimum representation from each group, while minimizing the maximum distance from any agent to its nearest committee member (the egalitarian objective).
The key difference between our work and \citet{pulyassary2025constant} and \citet{burkhardt2024low} is that we incorporate fairness constraints on minimum representation requirements for each demographic group.

When full cardinal information is available, the ordinal fair $k$-committee selection problem is equivalent to the fair $k$-center problem~\cite{kleindessner2019fair}. 
To develop algorithmic solutions for the ordinal fair $k$-committee selection problem, we build on techniques from the fair $k$-center problem. 
For consistency with existing literature, we formulate our problem as the ordinal fair $k$-center problem with limited cardinal information, providing a natural extension to clustering methodologies that maintains alignment with prior work. 
We give a formal definition of the problem in Definition~\ref{def:ordinal-fair-k-center-with-limited-distance-queries}.

\xhdr{Our contributions} 
In detail, our contributions are as follows.

$\bullet$ We initiate the study of low-distortion mechanisms for committee selection that incorporate \emph{fairness} constraints under the egalitarian (min-max) objective for the social cost, by modeling the task as the ordinal fair $k$-center problem with limited access to cardinal information. 

$\bullet$ We present a $5$-distortion algorithm with only $\bigO(k \log ^2 k)$ cardinal queries. 
Additionally, we present a $3$-distortion algorithm that requires $2k^2$ queries. 

\subsection{Our techniques}
Our methods build upon techniques from both the ordinal setting with limited query access and the clustering domain.
The ordinal $k$-center under limited cardinal information was studied by
\citet{burkhardt2024low}, where they presented two algorithmic results: a $2$-distortion algorithm with $\frac{k^2-k}{2}$ queries and a $4$-distortion algorithm with only $2k$ queries. Both these algorithms are based on adaptions of~\citet{gonzalez1985clustering}'s algorithm.
On the other hand, in the clustering domain,
a common framework that has been successful in solving the fair $k$-center problem (with complete cardinal information), is that of maximal matching~\cite{jones2020fair,gadekar2025fair}.
This approach, on a high level,  first solves the standard $k$-center problem to obtain a good quality set of ``fairness-oblivious'' centers $T$ of size $k$ with good approximation guarantee, and
projects it onto a set of fair centers via maximal matching, inuring only a small loss in the quality.
In our work, we take the set $T$ to be the output of the algorithm of~\cite{burkhardt2024low}, using either the $2$-distortion or $4$-distortion variant. We first describe the underlying idea for obtaining a fair set of centers from $T$ via the maximum matching framework of~\cite{jones2020fair,gadekar2025fair}. We then discuss the main challenges in adapting this framework to the ordinal setting, and finally present our novel techniques for overcoming these challenges.
For clarity of presentation, we restrict our attention to the case with $k$ groups, $\mathbb{G}=\{G_1,\dots,G_k\}$, each having a unit requirement, and the optimal cost is $1$. 
Let $d$ be the distance function of the underlying metric space and $n$ be the number of points (or agents) in the input instance.

The key idea of the matching-based approach consists of two components. 
The first component is to use the properties of the (ordered) set $T=(t_1,\dots,t_k)$ adapted from the approaches of~\citet{gonzalez1985clustering}.  
In particular, such set $T$ has a \emph{critical} index $\ell^* \in [k]$, which is the largest index $\ell \in [k]$ such that the points $T_\ell := (t_1,\dots,t_\ell)$ of $T$ are from different clusters of an (fixed) optimal solution, and the cost of $T_\ell$ is within $\alpha$ factor of the optimal cost, for some constant $\alpha>1$. 
The second component is to project the ``fairness-oblivious'' centers $T_{\ell}$ onto a set of fair centers via maximal matching. 
In more detail, for $\ell \in [k]$, consider the bipartite graph $H_\ell$ between the centers of $T_{\ell}$ (left partition) and the demographic groups $\GroupSet$ (right partition), with edges between $t \in T_{\ell}$ and group $G \in \GroupSet$ if $d(t,G) \le \lambda$, for some non-negative real $\lambda$.
Let $\lambda_{\ell}$ be the minimum $\lambda$ such that there exists a left-perfect matching on $H$, noting that $\lambda_\ell\le 1$.
This implies that, if we find a left-perfect matching on $H_\ell$, then we can obtain a set of fair centers of size $k$ from $T_{\ell}$.
Furthermore, the cost of such a solution, denoted as $Sol(\ell)$, is at most $\cost(T_{\ell}) + \lambda_{\ell}$, by triangle inequality.
Moreover, when $\ell = \ell^*$, $H_{\ell^*}$ has a perfect matching on the left-partition ($T_{\ell^*}$), and hence $T_{\ell^*}$ can be extended to a fair solution with cost $\alpha+1$.

Notice that, $\lambda_{\ell^*}$ belongs to one of the $nk$ distances, and $\ell^{*}$ can be guessed over the range $[k]$.
This immediately implies an algorithm to obtain distortion $3$ (and distortion $5$ resp.), by combining the $2$-distortion (and $4$-distortion resp.) algorithm of \cite{burkhardt2024low} for $T$ with the matching frameworks of~\cite{jones2020fair,gadekar2025fair}, using additional $nk(k\log (nk))$ distance queries, where the $k \log (nk)$ factor is for constructing the (edges of the) bipartite graph. 
This naturally leads to the following question: \emph{can we achieve the same distortion guarantees with smaller number of queries, preferably near-linear in $k$, to the cardinal information?}

\vspace{1em}

For each $\ell \in [k]$, one possible idea is to employ binary search on $nk$ distances to find $\lambda_{\ell}$. However, the number of queries is still linear in $nk$, due to the need to sort these distances for the binary search.
To reduce the number of queries, our first idea is to shrink the search space of candidate distances from $nk$ to $k^2$. For each center $t \in T$ and each group $G$ (there are at most $k$ groups), we identify the nearest point in $G$ to $t$ using only the ordinal ranking of $t$. 
This results in at most $k^2$ distances, and hence requires $k^2$ many queries to construct the bipartite graphs that facilitate us find the minimum $\lambda_{\ell}$ for any $\ell \in [k]$.
This yields a $3$-distortion algorithm using $\bigO(k^2)$ queries.
We give a detailed analysis in Section~\ref{sec:conventional-algorithms}.

A natural way to reduce the distance queries is to conduct a binary search on $\ell$ over function $Sol(\ell)$. However, this not possible:  as $cost(T_{\ell})$ is non-increasing and $\lambda_{\ell}$ is non-decreasing, $Sol(\ell)$ is not necessarily monotone. 
Our first key contribution is introducing the predicate $$P(\ell) \equiv (4 \lambda_{\ell} \leq cost(T_{\ell}));$$
which \emph{is monotone} and facilitates binary search. We find the largest $\ell$, denoted $L$, such that $P(L)$ is true but $P(L+1)$ is false. 
Using properties we derive in Lemma~\ref{lemma:monotonicity-of-lambda-and-cost-init-sol-set}, we prove that $\min\{Sol(L), Sol(L+1)\} \leq 5 cost(S^*)$ where $S^*$ is the optimal solution. This establishes the correctness of our binary-search procedure. 

Next, we handle the distance query complexities of evaluating $P(\ell)$ and $\lambda_{\ell}$, which are completely new problems. 

For the query complexity of computing $\lambda_{\ell}$ (detailed in Section~\ref{section:finding-minimum-lambda}), at a high level, we reduce the search space for $\lambda_{\ell}$ (of size $k^2$) by a factor of $\frac{3}{4}$ at each iteration. To accomplish this, at each round, we identify a \emph{pivot point} in the search space such that at least $\frac{1}{4}$ of the candidate distances are no smaller than the pivot and at least $\frac{1}{4}$ are no greater than the pivot such that at least one part does not contain $\lambda_\ell$. Finally, using a median-of-medians-style subroutine (Algorithm~\ref{alg:finding-lambda-ell} and Algorithm~\ref{alg:median-median}), we find such a pivot; this yields query complexity $O(\ell \log^{2} k)$.
Notice that, for the whole algorithm, we only need to compute $\lambda_{\ell}$ for a fixed $\ell$ twice. Therefore, in the worst case scenario when $\ell = k$, the total query complexity of computing $\lambda_{\ell}$ is $O(k \log^2 k)$.

For the query complexity of evaluating $P(\ell)$, a natural approach is to compute  $\lambda_{\ell}$ and compare it with $T_{\ell}$; however, this approach needs to compute $\lambda_{\ell}$ $\log(k)$ times and leads to  $O(\ell \log^3(k))$ queries. To avoid repetitively computing $\lambda_{\ell}$, we apply the equivalent formulation of $P(\ell)$ as 
$$P(\ell)\equiv (\lambda_{\ell} \leq \frac{cost(T_{\ell})}{4}),$$
which holds precisely when there exists a left-perfect matching when setting $\lambda = \frac{cost(T_{\ell})}{4}$. Using this design, we obtain an overall query complexity of $O(k \log^2 k)$.

\subsection{Further related work}
Our work builds on prior research in multi-winner elections, metric distortion, and (fair) clustering. As these areas have been extensively studied, we only discuss works that are most related to our 
methodological approach.

Besides extending the framework of metric distortion and relating the problem with clustering~\cite{burkhardt2024low, pulyassary2025constant, caragiannis2022metric}, 
other lines of works include the study of the \emph{(approximately) stable committee selection} problem~\cite{fain2018fair,jiang2020approximately}, and selecting a committee of minimum size that is a \emph{Condorcet winner}~\cite{charikar2025six}.
Concurrently, there is a work which studies a \emph{peer selection} problem, where the voters and candidates coincide~\cite{cembrano2025metric}. We note from the technical perspective, they focus on the line metric without distance queries, and their techniques may not directly extend to our setting.

Clustering is a fundamental unsupervised machine learning task that has been extensively studied~\cite{jain1999data}. In recent years, awareness about automated decision-making propagating biases has led to an increase in attention towards algorithmic fairness principles. Consequently, several classic unsupervised learning problems, including clustering, have been reintroduced with fairness constraints~\cite{thejaswi2021diversity, thejaswi2022clustering,thejaswi2024diversity,gadekar2025capacitated,matakos2024fair,zhang2024parameterized,abbasi2023parameterized,gadekar2025fair,chierichetti2017fair}. Among these formulations, the one most relevant to our work is the fair $k$-center problem introduced by \citet{kleindessner2019fair}, which seeks to choose a specified number of representatives from each demographic group while minimizing the egalitarian (min-max) objective, \ie, minimizing the maximum distance from any client to its nearest representative. \citet{jones2020fair} presented a $3$-approximation algorithm using a matching framework. This framework has been extended to solve several fair clustering variants~\cite{gadekar2025fair,chen2024approximation}.

The remainder of the paper is organized as follows. 
Section~\ref{sec:problem-formulation} introduces the necessary terminology and problem definitions. 
Section~\ref{sec:conventional-algorithms} provides an overview of the algorithmic techniques that form the basis for our algorithmic methods and proposes a $3$-distortion algorithm with $\bigO(k^2)$ queries.
Section~\ref{sec:analysis-ordinal-fair-k-center-with-limited-distance-queries} presents our main algorithmic contribution, a $5$-distortion algorithm with $\bigO(k\log^2 k)$ queries.

\section{Problem Formulation}
\label{sec:problem-formulation}
In what follows, we use the term agent(s)---common in computational social choice literature---interchangeably with point(s)---as used in the clustering literature. 
Let $\Mcal$ denote the set of all metric spaces on finite points.
Let $(U, d) \in \Mcal$ be a metric space with distance function $d: U \times U \rightarrow \RR_{\ge 0}$.
For a subset $S \subseteq U$ of points, we use $d(u,\SolSet)$ to denote $\min_{s \in \SolSet} d(u, s)$. 
We consider the setting where ordinal rankings of all points of $U$ are available. 
Specifically, for each point $v \in U$, a linear order $\succ_v: [n] \rightarrow U$ is known. We write $u \succ_{v} u'$ to indicate that $u$ is closer $v$ than $u'$ in the ranking $\succ_v$. 
Furthermore, we assume that the linear orders are \emph{consistent} with $d$,\ie, for every $v \in U$ and for all $u, u' \in U$, $u \succ_v u'$ implies $d(u, v) \leq d(u', v)$. 
An \emph{ordinal profile} of $U$ is a collection of linear orders of all the points of $U$, which we denote by $\succ_U$.
We say an ordinal profile $\succ_U$ of $U$ is \emph{consistent} with $d$ if all the linear orders in $\succ_U$ are consistent with $d$.
Let $\profile(d)$ denote the set of all ordinal profiles of $U$ that are consistent with $d$.
We first introduce the ordinal fair $k$-center problem and make it precise how social cost is defined in this context.

\begin{definition}[The ordinal fair $k$-center problem (\ordfairkcen)]
\label{def:ordinal-fair-k-center-with-limited-distance-queries}
An instance of the ordinal fair $k$-center problem is defined on a set $U$ of $n$ points from a metric space $(U,d)$ with unknown $d$, an integer $k \geq  1$, a collection $\succ_U=\{\succ_u\}_{u \in U}$ of linear orders that is  consistent with $d$, a collection $\GroupSet= \{\Group_1, \dots, \Group_t : \Group_i \subseteq U\}$ of $t$ subsets of data points 
that form a partition of $U$,
and a vectors of requirements $\alphavec = \{\alpha_1,\dots,\alpha_t\}$, where $\alpha_i \ge 0$ corresponds to the requirement of group $\Group_i$. A set $\SolSet \subseteq U$ of centers is a feasible solution if $|\SolSet| = k$ and $\alpha_i \leq |\SolSet \cap \Group_i|$ for all $i \in [t]$. 
The \emph{social cost} of a solution $\SolSet$ is the maximum distance of any point to $S$, \ie, $\cost(S) = \max_{u \in U} d(u,\SolSet)$. The goal of the ordinal fair $k$-center problem is to find a feasible solution with minimum social cost.
\end{definition}

An instance of the ordinal fair $k$-center problem is denoted as $\Ical = \ordfairkcenins$.
The ordinal $k$-center problem is defined when no fairness constraints are enforced.
The fair $k$-center problem is defined analogously, with the key distinction that the distance function $d$ of the underlying metric space is fully known.  
For brevity, we denote the ordinal $k$-center problem with and without fairness constraints by \ordfairkcen\ and \ordkcen, respectively.

We evaluate the quality of the solution via distortion~\cite{procaccia2006distortion}, defined as the approximation ratio between the social cost of the algorithm's solution 
and the social cost of the optimal solution (computed with access to all pairwise distances).
We first formally define distortion for \ordfairkcen, adopting the notion from Burkhardt et~al.~\cite{burkhardt2024low}.

\begin{definition}[Distortion of Algorithm $\alg$]
    For a metric space $(U,d) \in \Mcal$ and an ordinal profile $\succ_U \in \profile(d)$, let $\Ical_{\succ_U}$ be the collection of instances of the ordinal (fair) $k$-center problem defined on points $U$ and ordinal profile $\succ_U$. 
    Let $\alg(I \mid d)$ denote the solution returned by algorithm $\alg$ on instance $I \in \Ical_{\succ_U}$ with underlying metric $d$. 
    Let $\OptSolSet_{I, d}$ be an optimal solution to $I$ when the underlying metric is $d$.
    The distortion of $\alg$ is defined as:
    \[
    \distortion(\alg)\coloneqq \sup_{\substack{(U,d) \in \Mcal \\ \succ_U \in \profile(d)}} \sup_{I \in \Ical_{d,\succ_U}}  \frac{\cost(\alg(I \mid d))}{\cost(\OptSolSet_{I,d})}
    \]
\end{definition}

Since without access to cardinal information, it is not possible to obtain an algorithm with bounded distortion for \ordkcen~\cite{burkhardt2024low}, we allow the algorithm to make \emph{query access} to $d$: given a pair $u,v\in U$, the algorithm can query the distance $d(u,v)$. 
Hence, our goal is to design an algorithm $\alg$ that uses a limited number of queries while ensuring that the solution quality is provably close to the optimal solution.

\xhdr{Reduction} 
For simplicity of exposition, we transform an instance $\Ical = (U, k, \succ_U, \GroupSet, \alphavec)$ of the ordinal fair $k$-center problem, where $\GroupSet = \{\Group_1, \dots, \Group_t\}$ and $\alphavec = \{\alpha_1, \dots, \alpha_t\}$, into an equivalent instance $\Ical' = (U', k, \succ_{U'}, \GroupSet', \alphavec')$ with exactly $k$ groups.
Formally, we construct the transformed instance $\Ical'$ as follows. 
Let $r = \sum_{i \in [t]} \alpha_i$. 
For each group $G_i \in \GroupSet$, we make $\alpha_i$ disjoint copies of $G_i$ in $\GroupSet'$ by duplicating each element of $G_i$ precisely $\alpha_i$ times. Next, if $r < k$, then we create $(k-r)$ new groups in $\GroupSet'$ each containing a distinct copy of $U$. We set $\alphavec' = \{1, 1, \dots, 1\}$, so that the fairness constraint requires selecting exactly one center from each group $\Group'_j$ for $j \in [k]$. 
We define the linear orderings in $\succ_{U'}$ as follows. 
For each point $v' \in U'$, let $v \in U$ be the original point of $v'$ (where $v'$ may be $v$ itself or a duplicate copy of $v$). 
We define the linear order $\succ_{v'}$ on $U'$ as an order extension of $\succ_{v}$ on $U$: for any points $a', b' \in U'$ and their corresponding original points $a, b \in U$, we have $a' \succ_{v'} b'$ if and only if $a \succ_{v} b$.
When $a = b$, the ordering of $a'$ and $b'$ is arbitrary.

This transformation is standard in the fair clustering literature and has been used in several prior works~\cite{thejaswi2022clustering, thejaswi2024diversity, gadekar2025fair}, in this transformed instance $\Ical'$, three key properties are relevant for our analysis to hold: ($i$) the cost of optimal solution in $\Ical'$ remains identical to that of the original instance $\Ical$, since the additional points are duplicates; {($ii$) no extra distance queries are required, as duplicate points share identical distances;} and ($iii$) if the algorithm's output for $\Ical'$ contains multiple copies of the same center, we retain a single copy and supplement it with arbitrary 
centers
from the corresponding group to satisfy the fairness constraints, without affecting the theoretical guarantees on the distortion factor.

From now on, we present our algorithmic results for the instance $\Ical'$, and we directly write the instance as $\Ical$ when there is no ambiguity.

\xhdr{Preliminaries}
In what follows, we define the terminology and definitions necessary to present our algorithmic results.
Given a  set $U$ and two subsets $A,B \subseteq U$ of points, we say that $A$ hits $B$ if $A \cap B \neq \emptyset$. More generally, for an integer $i \geq 0$, $A$ hits $B$ $i$ times if $|A \cap B|=i$. For an ordered set $T = (t_1,\dots,t_k) \subseteq U$, and $\ell \in [k]$, we denote by $T_\ell$ the $\ell$-length prefix of $T$, \ie, $T_\ell= (t_1,\dots, t_\ell)$.

To characterize the quality of Gonzalez-type greedy algorithms~\cite{gonzalez1985clustering, burkhardt2024low}, we apply the notion of progressive cover and critical index.
Let $\mathbf{\Pi}^*$ denote the partition of $U$ induced by the optimal solution $S^*$; \ie, each cluster contains all points whose nearest center in $S^*$ is the same.
Note that the $k$ centers $T$ obtained by Gonzalez's classic $k$-center algorithm~\cite{gonzalez1985clustering} form a progressive $2$-cover. 
In this case, the critical index for $T$ is the largest $\ell \in [k]$ such that $T_{\ell}$ hits each part of the partition $\mathbf{\Pi}^*$ at most once, \ie, the largest $\ell$ that satisfies (i) also satisfies (ii) in the later definition. 
This property that the largest $\ell$ that satisfies (i) also satisfies (ii) also holds for the solution obtained by the algorithm of Burkhardt et~al.~\cite{burkhardt2024low} for the ordinal $k$-center problem.

\begin{definition}[Progressive cover and critical index]
Let $\Jcal$ be an instance of \kcenter (or \fairkcenter) and let $S^*$ be an optimal solution with corresponding partition $\mathbf{\Pi}^*$ of $U$. Fix some $\gamma >0$, and consider an ordered set $T=(t_1,\dots,t_k) \subseteq U$ of size $k$. We say $T$ is a \emph{progressive $\gamma$-cover} for $\Jcal$ \wrt $S^*$, if there exists $\ell \in [k]$ such that ($i$) $T_{\ell}$ hits each part of $\mathbf{\Pi}^*$ at most once, and ($ii$) $\cost(T_{\ell}) \leq \gamma\cdot \cost(S^*)$. Furthermore, we say $\ell \in [k]$ as the \emph{critical index} of a progressive $\gamma$-cover $T$ for $\Jcal$  \wrt $S^*$ if $\ell$ is the maximum index such that ($i$) and ($ii$) hold.
\end{definition}

When $S^*$ is implicit from context, we omit saying \wrt $S^*$.
A crucial tool that is used to obtain a feasible (fair) solution from a progressive $\gamma$-cover $T$~\cite{gadekar2025fair,jones2020fair} is to project $T$ onto a fair solution using a matching on a carefully constructed bipartite graph. 
In our paper, we apply the notion of left-perfect matching to define the projection graph. We recall the definition of left-perfect matching in Definition~\ref{definition:left-perfect-matching} in the appendix.
We formally define the projection graph below.
\begin{definition}[Projection graph]
Consider an instance $\Jcal=\fairkcenins$ of \fairkcenter, and let $S \subseteq U, |S|\le k$. Furthermore, let $\ell \in [|S|]$, and $\lambda \in \mathbb{R}_+$. 
We define the \emph{$(\ell, \lambda)$-projection graph}  $H^{\ell}_{\lambda} = (\InitSolSet_{\ell} \cup \MultiGroupSet, E_{\lambda})$, where there is an edge between  $s \in \InitSolSet_{\ell}$ and $\MultiGroup_i \in \MultiGroupSet$ if $d(s, \MultiGroup_i) \le \lambda$. Furthermore, for a fixed $\ell \in [k]$, we define $\lambda_{\ell}$ as the minimum $\lambda$ such that there exists a \emph{left-perfect matching} on $H^{\ell}_{\lambda}$. 
\end{definition}

By using the projection graph, we can obtain a $\gamma+1$-distortion feasible (fair) solution from a progressive $\gamma$-cover $T$~\cite{gadekar2025fair,jones2020fair}.

\section{$3$-distortion with $\bigO(k^2)$ queries}
\label{sec:conventional-algorithms}

As a warm-up, we first revisit the algorithm for the fair $k$-center problem (\fairkcenter)~\cite{gadekar2025fair} and, by combining it with the techniques of \citet{burkhardt2024low} for the ordinal $k$-center (\ordkcen) setting, we present a 3-distortion algorithm for \ordfairkcen using $O(k^2)$ distance queries.

\xhdr{\large The fair $k$-center algorithm}
First, the algorithm computes an ordered sequence \InitSolSet of $k$ centers using Gonzalez's $k$-center algorithm~\cite{gonzalez1985clustering}
by selecting an arbitrary point as the first center and then choose the point that is farthest from all previously selected centers for $k-1$ iterations.
Then, for each $\ell \in [k]$, it iteratively constructs a \emph{$(\ell, \lambda)$-projection graph} $H^{\ell}_{\lambda}$ over $\lambda$ ranging across all candidate distances from the input.
By doing so, it finds the minimum $\lambda$, denoted $\lambda_{\ell}$, such that there exists a left-perfect matching in $H^{\ell}_{\lambda_{\ell}}$.
We have the following guarantee due to~\citet{gadekar2025fair}.

\begin{theorem}[Theorem 3.1~\cite{gadekar2025fair}]
 \label{corollary:lambda-ell-star-lower-bound}
    Consider $\InitSolSet$, the (ordered) set of $k$ centers obtained using Gonzalez's $k$-center algorithm~\cite{gonzalez1985clustering}.
    Then, $\InitSolSet$ is a progressive $2$-cover for a fair $k$-center instance $\Jcal$. 
    Let $\ell^*$ be the critical index of $\InitSolSet$.
    Furthermore, we have $\lambda_{\ell^*} \leq \cost(\OptSolSet)$ and $\cost(\InitSolSet_{\ell^*}) \leq 2\, \cost(\OptSolSet)$, where $\OptSolSet$ is an optimal solution for $\Jcal$. 
\end{theorem}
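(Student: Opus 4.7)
The plan is to identify the critical index via a pigeonhole argument on the clusters of $\mathbf{\Pi}^*$, and then prove the two claimed bounds separately using the greedy property of Gonzalez's algorithm together with the triangle inequality.

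I would begin by letting $\ell^*$ denote the largest index in $[k]$ for which $\InitSolSet_{\ell^*}$ hits each part of $\mathbf{\Pi}^*$ at most once; this index is well defined since the singleton $\InitSolSet_1$ trivially satisfies the property. To bound $\cost(\InitSolSet_{\ell^*})$, I would split into two cases. If $\ell^*=k$, then $\InitSolSet$ must hit every one of the $k$ clusters of $\mathbf{\Pi}^*$ exactly once, and for any $u\in U$ the triangle inequality through the optimal center of the cluster containing $u$ gives $d(u,\InitSolSet)\le 2\cost(\OptSolSet)$. If $\ell^*<k$, maximality forces $t_{\ell^*+1}$ to share a cluster $C_j$ with some $t_i\in\InitSolSet_{\ell^*}$, so triangle inequality through the optimal center $s^*_j$ of $C_j$ gives $d(t_{\ell^*+1},t_i)\le 2\cost(\OptSolSet)$; since Gonzalez's greedy step selects $t_{\ell^*+1}$ as a maximizer of $d(\cdot,\InitSolSet_{\ell^*})$, we conclude $\cost(\InitSolSet_{\ell^*})=d(t_{\ell^*+1},\InitSolSet_{\ell^*})\le d(t_{\ell^*+1},t_i)\le 2\cost(\OptSolSet)$.

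For the second bound, I would explicitly exhibit a left-perfect matching in $H^{\ell^*}_{\cost(\OptSolSet)}$, which immediately yields $\lambda_{\ell^*}\le\cost(\OptSolSet)$. After the reduction, the $k$ groups $\MultiGroup_1,\dots,\MultiGroup_k$ all carry unit requirements, so $\OptSolSet$ contains exactly one center $s^*_j\in\MultiGroup_j$ per group, with associated cluster $C_j\in\mathbf{\Pi}^*$. Define $\pi:[\ell^*]\to[k]$ by sending $i$ to the index of the unique cluster containing $t_i$; this map is injective since $\InitSolSet_{\ell^*}$ hits each cluster at most once, and the Voronoi property of $\mathbf{\Pi}^*$ gives $d(t_i,\MultiGroup_{\pi(i)})\le d(t_i,s^*_{\pi(i)})\le\cost(\OptSolSet)$. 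Therefore $\{(t_i,\MultiGroup_{\pi(i)})\}_{i\in[\ell^*]}$ is a valid edge set in $H^{\ell^*}_{\cost(\OptSolSet)}$ and, by injectivity of $\pi$, constitutes a left-perfect matching.

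The main subtlety is the identification $\cost(\InitSolSet_{\ell^*})=d(t_{\ell^*+1},\InitSolSet_{\ell^*})$ via Gonzalez's greedy rule in the case $\ell^*<k$: once this is in place, every remaining step reduces to a direct triangle-inequality bound or a straightforward Hall-style matching construction, so I do not anticipate any further obstacles.
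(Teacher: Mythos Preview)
Your proposal is correct and follows essentially the same approach the paper uses. The paper does not prove this theorem directly (it is cited from~\cite{gadekar2025fair}), but the appendix proof of the analogous Lemma~\ref{lemma:lambda-ell-star-lower-bound-smaller-queries} for the \bcfrss\ centers has the identical structure: take the maximal $\ell$ with the at-most-once hitting property, split into $\ell=k$ versus $\ell<k$, use the (approximate) farthest-point rule plus triangle inequality for the cost bound, and exhibit the left-perfect matching via the cluster-to-group injection for the $\lambda_{\ell^*}$ bound. The only difference is that here you invoke the exact Gonzalez greedy rule $\cost(\InitSolSet_{\ell^*})=d(t_{\ell^*+1},\InitSolSet_{\ell^*})$, which gives the factor $2$ instead of the factor $4$ obtained from the $\tfrac{1}{2}$-approximate farthest point in Lemma~3.6 of~\cite{burkhardt2024low}.
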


Given $\InitSolSet_{\ell}$, we can construct a feasible solution $\SolSet_{\ell}$ as follows:
Given $\ell \in [k]$ and $\lambda_{\ell}$, we can construct the left-perfect matching on the $(\ell, \lambda_{\ell})$-parameterized bipartite graph $H^{\ell}_{\lambda_{\ell}}$.
Suppose that $s \in \InitSolSet_{\ell}$ is matched to $\MultiGroup_i$ in the left-perfect matching, we add the point in $\MultiGroup_i$ that is closest to $s$ into $\SolSet_{\ell}$. 
We then add one arbitrary point from the groups that are not matched to $\InitSolSet_{\ell}$.
This consists a feasible solution 
according to the reduction in Section~\ref{sec:problem-formulation}.
As a direct implication of \Cref{corollary:lambda-ell-star-lower-bound}, there exists some $\tilde{\ell}$ for which $\SolSet_{\tilde{\ell}}$ is a $3$-approximate solution.

\begin{corollary}
    \label{corollary:conventional-lambda-ell-star-lower-bound}
    Let $\tilde{\ell} = \arg\min_{\ell} \big(\cost(\SolSet_{\ell}) + \lambda_{\ell}\big)$.
    The solution $\SolSet_{\tilde{\ell}}$ is a $3$-approximate solution for the fair $k$-center problem.
\end{corollary}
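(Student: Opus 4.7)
The plan is to show that for every $\ell$ the triangle inequality yields the clean bound $\cost(\SolSet_\ell)\le\cost(\InitSolSet_\ell)+\lambda_\ell$, and then to specialize at the critical index $\ell^*$ of $\InitSolSet$ promised by Theorem~\ref{corollary:lambda-ell-star-lower-bound}, where the right-hand side is at most $3\cost(\OptSolSet)$. Since $\tilde\ell$ is chosen by minimization, the bound transfers to $\SolSet_{\tilde\ell}$.

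For the per-$\ell$ bound, fix any $\ell\in[k]$. By definition of $\lambda_\ell$, the projection graph $H^{\ell}_{\lambda_\ell}$ admits a left-perfect matching, so every $t\in\InitSolSet_\ell$ is paired with a group $\Group_{i(t)}\in\GroupSet$ at distance at most $\lambda_\ell$. The algorithm inserts the point $s_t\in\Group_{i(t)}$ closest to $t$ into $\SolSet_\ell$ (so $d(t,s_t)\le\lambda_\ell$) and pads the remaining $k-\ell$ slots with arbitrary representatives of the unmatched groups. For any $u\in U$, letting $t\in\InitSolSet_\ell$ attain $d(u,\InitSolSet_\ell)$, the triangle inequality gives
\[
d(u,\SolSet_\ell)\;\le\; d(u,s_t)\;\le\; d(u,t)+d(t,s_t)\;\le\;\cost(\InitSolSet_\ell)+\lambda_\ell,
\]
and taking a maximum over $u\in U$ yields $\cost(\SolSet_\ell)\le\cost(\InitSolSet_\ell)+\lambda_\ell$.

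Specializing at $\ell=\ell^*$, Theorem~\ref{corollary:lambda-ell-star-lower-bound} supplies $\cost(\InitSolSet_{\ell^*})\le 2\cost(\OptSolSet)$ and $\lambda_{\ell^*}\le\cost(\OptSolSet)$, hence $\cost(\SolSet_{\ell^*})\le 3\cost(\OptSolSet)$. By the definition of $\tilde\ell$ as a minimizer over $\ell\in[k]$, the chosen solution $\SolSet_{\tilde\ell}$ is no worse than $\SolSet_{\ell^*}$, so $\cost(\SolSet_{\tilde\ell})\le 3\cost(\OptSolSet)$.

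The main obstacle I anticipate is in verifying the per-$\ell$ bound when some groups are unmatched: one needs to check that no point $u$ is penalized because the arbitrary fillers for unmatched groups happen to sit far from $u$. This is harmless because $\SolSet_\ell$ already contains the $s_t$'s, so $d(u,\SolSet_\ell)$ is bounded purely through them, while adding more centers can only shrink $d(u,\SolSet_\ell)$; the reduction of Section~\ref{sec:problem-formulation} ensures such fillers exist as distinct candidates. Everything else is a direct consequence of Theorem~\ref{corollary:lambda-ell-star-lower-bound}.
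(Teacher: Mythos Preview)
Your per-$\ell$ bound $\cost(\SolSet_\ell)\le\cost(\InitSolSet_\ell)+\lambda_\ell$ and the specialization at $\ell^*$ are correct and constitute exactly the intended argument. The gap is in your final step. You write that ``the chosen solution $\SolSet_{\tilde\ell}$ is no worse than $\SolSet_{\ell^*}$,'' i.e., $\cost(\SolSet_{\tilde\ell})\le\cost(\SolSet_{\ell^*})$, but this does not follow from $\tilde\ell=\arg\min_\ell\bigl(\cost(\SolSet_\ell)+\lambda_\ell\bigr)$. What the minimization actually gives is
\[
\cost(\SolSet_{\tilde\ell})
\;\le\;\cost(\SolSet_{\tilde\ell})+\lambda_{\tilde\ell}
\;\le\;\cost(\SolSet_{\ell^*})+\lambda_{\ell^*}
\;\le\;\bigl(\cost(\InitSolSet_{\ell^*})+\lambda_{\ell^*}\bigr)+\lambda_{\ell^*}
\;\le\;4\,\cost(\OptSolSet),
\]
which is only factor $4$. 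The extra $\lambda_{\ell^*}$ does not cancel, and there is no general reason $\cost(\SolSet_{\tilde\ell})\le\cost(\SolSet_{\ell^*})$ should hold when you are minimizing a different quantity.

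The fix is to minimize the \emph{upper bound} $\cost(\InitSolSet_\ell)+\lambda_\ell$ rather than $\cost(\SolSet_\ell)+\lambda_\ell$; this is how the paper describes the matching framework in the introduction (``the cost of such a solution \ldots\ is at most $\cost(T_\ell)+\lambda_\ell$'') and how the analogous step is carried out in Theorem~\ref{theorem:get-feasible-solution}. With that objective your chain closes directly:
\[
\cost(\SolSet_{\tilde\ell})
\;\le\;\cost(\InitSolSet_{\tilde\ell})+\lambda_{\tilde\ell}
\;\le\;\cost(\InitSolSet_{\ell^*})+\lambda_{\ell^*}
\;\le\;3\,\cost(\OptSolSet).
\]
Minimizing $\cost(\SolSet_\ell)$ alone would also suffice. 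The objective $\cost(\SolSet_\ell)+\lambda_\ell$ as literally written in the corollary appears to be a slip; your argument is otherwise the standard one, but you should flag this discrepancy rather than silently treating the minimizer as if it compared costs directly.
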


\xhdr{\large A $3$-distortion algorithm with $\bigO(k^2)$ queries}
Next, we summarize the $2$-distortion algorithm of~\citet{burkhardt2024low} for \ordkcen, which is an adaptation of Gonzalez's algorithm to the ordinal setting. At each iteration, when selecting a new center---the point farthest from the current centers---the algorithm queries distances to the farthest points in the ordinal rankings and selects the farthest point based on these queries. We restate this result in Theorem~\ref{theorem:ordinal-k-center-algorithm} below.

\begin{theorem}[Theorem 3.1~\cite{burkhardt2024low}]
\label{theorem:ordinal-k-center-algorithm}
There exists a deterministic $2$-distortion algorithm for $k$-center that makes $\frac{k^2 - k}{2}$
queries.
\end{theorem}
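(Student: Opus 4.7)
The plan is to mimic Gonzalez's classic farthest-point greedy heuristic~\cite{gonzalez1985clustering}, but to execute each iteration using only the ordinal rankings together with a single distance query per already-selected center. First I would pick an arbitrary point $t_1$, and for $i = 2, \ldots, k$, I would construct $t_i$ so that it coincides with $\arg\max_{v \in U} d(v, T_{i-1})$, where $T_{i-1} = (t_1,\dots,t_{i-1})$; the whole correctness argument then reduces to the standard Gonzalez analysis.

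The key observation enabling this simulation is that the ordinal profile alone identifies, for every point $v \in U$, its closest center in $T_{i-1}$: by scanning $\succ_v$ until we first see an element of $T_{i-1}$, we obtain $\pi(v) \in T_{i-1}$. This partitions $U$ into clusters $V_j = \{v : \pi(v) = t_j\}$ for $j \in [i-1]$. Within each $V_j$, the point maximizing $d(\cdot, T_{i-1})$ is exactly the element of $V_j$ appearing last in $\succ_{t_j}$; call it $c_j$. No queries are needed to identify the $c_j$'s, because the profile is consistent with $d$. To compare the $c_j$'s across clusters I would then issue the $i-1$ queries $d(c_j, t_j)$ for $j \in [i-1]$ and set $t_i := c_{j^*}$, where $j^* \in \arg\max_j d(c_j, t_j)$.

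For correctness, I first check that $c_{j^*}$ truly maximizes $d(\cdot, T_{i-1})$ over all of $U$. If $v^*$ is any true maximizer with $\pi(v^*) = t_j$, then $d(v^*, T_{i-1}) = d(v^*, t_j) \le d(c_j, t_j) \le d(c_{j^*}, t_{j^*}) = d(c_{j^*}, T_{i-1})$, so the algorithm reproduces exactly the Gonzalez greedy sequence. The classical Gonzalez pigeonhole argument then yields $\cost(T) \le 2 \cdot \cost(\OptSolSet_{I,d})$: if some point were at distance greater than $2 \cdot \cost(\OptSolSet_{I,d})$ from $T$, then this point together with $T$ would form a set of $k+1$ points that are pairwise separated by more than $2 \cdot \cost(\OptSolSet_{I,d})$ (since by the greedy rule each $t_i$ is at least as far from $T_{i-1}$ as any remaining point), contradicting the existence of only $k$ optimal clusters of diameter at most $2 \cdot \cost(\OptSolSet_{I,d})$. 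This establishes the distortion bound of $2$, and the total number of queries telescopes to $\sum_{i=2}^{k}(i-1) = \binom{k}{2} = \frac{k^2 - k}{2}$.

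The only subtle step is the cross-cluster comparison: one must be sure that the global farthest point can be found among the $c_j$'s alone, which rests on the fact that $d(\cdot, T_{i-1})$ factors through $\pi$, together with the consistency of the ordinal profile guaranteeing that the last element of $V_j$ in $\succ_{t_j}$ really does maximize $d(\cdot, t_j)$ over $V_j$. Beyond this observation, everything else is bookkeeping layered on top of the textbook Gonzalez argument, so I expect no further obstacle.
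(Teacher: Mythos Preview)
The paper does not prove this theorem; it is quoted verbatim as Theorem~3.1 of \citet{burkhardt2024low} and only accompanied by a one-sentence summary (``At each iteration, when selecting a new center\ldots the algorithm queries distances to the farthest points in the ordinal rankings and selects the farthest point based on these queries''). Your proposal is a correct and complete proof that fleshes out exactly this summary: you reproduce Gonzalez's greedy choice by using the ordinal profile to assign each point to its nearest current center, take the ordinally-last point in each Voronoi cell, and spend $i-1$ queries to pick the true farthest among these candidates. The correctness argument (that the global farthest point must be one of the $c_j$'s, and that the standard pigeonhole analysis then gives distortion~$2$) and the query count $\sum_{i=2}^k (i-1)=\binom{k}{2}$ are both sound, so there is nothing to correct and nothing to compare against beyond the paper's informal sketch.
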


We present a $3$-distortion algorithm by combining the approach of \citet{gadekar2025fair} and \citet{burkhardt2024low} with the pseudocode in Algorithm~\ref{alg:conventional-fair-k-center}.
We present the proof of the algorithm in the appendix.

\begin{algorithm}[t]
\SetAlgoNoEnd
    \caption{A $3$-distortion algorithm}
    \label{alg:conventional-fair-k-center}
    \SetAlgoLined
    \KwIn{An instance of \ordfairkcen, $\Jcal = \ordfairkcenins$.}
    \KwOut{A subset of $k$ items $\SolSet \subseteq U$.}
    Burkhardt et al.'s $2$-distortion $k$-center algorithm: compute a sequence of $k$ centers, \InitSolSet \label{alg:cov:burkhardt}

    Query all the distances of $d(s, \MultiGroup)$ for all $s \in \InitSolSet$ and $\MultiGroup \in \MultiGroupSet$

    \For{$\ell = 1, 2, \dots, k$}{

        \For{$(s, \MultiGroup) \in \InitSolSet_{\ell} \times \MultiGroupSet$}{

        Set $\lambda = d(s, \MultiGroup)$;
        Construct the \emph{$(\ell, \lambda)$-projection graph} $H^{\ell}_{\lambda}$;
        Check whether there exists a left-perfect matching in $H^{\ell}_{\lambda}$; if so, set $\lambda_{\min} = \lambda$.
            
        }
        Set $\lambda_{\ell} = \lambda_{\min}$. Construct the solution $\SolSet_{\ell}$ as described in the main text.
    }

    Return the solution $\SolSet_{\tilde{\ell}}$, where $\tilde{\ell} = \argmin_{\ell} \big(\cost(\SolSet_{\ell}) + \lambda_{\ell}\big)$. 
\end{algorithm}

\begin{restatable}{theorem}{naiveapproach}
\label{lemma:naive-approach}
Algorithm~\ref{alg:conventional-fair-k-center} is a $3$-distortion algorithm for the fair $k$-center problem that takes $2k^2 \in \bigO(k^2)$ distance queries.
\end{restatable}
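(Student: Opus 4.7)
I would prove the two claims---query complexity and distortion---separately, leveraging Theorem~\ref{theorem:ordinal-k-center-algorithm}, Theorem~\ref{corollary:lambda-ell-star-lower-bound}, and Corollary~\ref{corollary:conventional-lambda-ell-star-lower-bound}.

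\textbf{Query count.} Line~\ref{alg:cov:burkhardt} invokes the routine of \citet{burkhardt2024low}, using $\tfrac{k^2-k}{2}$ queries by Theorem~\ref{theorem:ordinal-k-center-algorithm}. Line 2 asks for $d(s,\Group)$ for each $(s,\Group)\in\InitSolSet\times\GroupSet$; here the nearest point of $\Group$ to $s$ is the earliest element of $\Group$ appearing in the ranking $\succ_s$, which is read off from ordinal information with no queries, so a single query then yields $d(s,\Group)$. Since the reduction in Section~\ref{sec:problem-formulation} produces exactly $k$ groups, this contributes $k^2$ queries. The inner loops only enumerate $\lambda$ over already-queried distances and run bipartite matching (a purely combinatorial step), so they need no further queries. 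Summing, the total is $\tfrac{k^2-k}{2}+k^2\le 2k^2$.

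\textbf{Distortion.} By Theorem~\ref{theorem:ordinal-k-center-algorithm} together with the remark preceding the definition of critical index, the ordered sequence $\InitSolSet$ returned in Line~\ref{alg:cov:burkhardt} is a progressive $2$-cover. Applying Theorem~\ref{corollary:lambda-ell-star-lower-bound} to $\InitSolSet$ produces a critical index $\ell^*\in[k]$ with $\cost(\InitSolSet_{\ell^*})\le 2\,\cost(\OptSolSet)$ and $\lambda_{\ell^*}\le \cost(\OptSolSet)$. I would next verify that the inner loop computes $\lambda_\ell$ exactly for every $\ell$: the matching structure of $H^{\ell}_{\lambda}$ changes only at thresholds of the form $d(s,\Group)$ with $s\in\InitSolSet_\ell$ and $\Group\in\GroupSet$, and all such values have already been queried. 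Constructing $\SolSet_{\ell^*}$ from the left-perfect matching on $H^{\ell^*}_{\lambda_{\ell^*}}$, for any agent $u$ let $t\in\InitSolSet_{\ell^*}$ be the closest center, so $d(u,t)\le\cost(\InitSolSet_{\ell^*})$; if $t$ is paired with group $\Group$ and $s\in\SolSet_{\ell^*}$ is its chosen representative, then $d(t,s)\le\lambda_{\ell^*}$, and by the triangle inequality $d(u,\SolSet_{\ell^*})\le \cost(\InitSolSet_{\ell^*})+\lambda_{\ell^*}\le 3\,\cost(\OptSolSet)$. Invoking Corollary~\ref{corollary:conventional-lambda-ell-star-lower-bound}, the index $\tilde{\ell}$ selected by the algorithm inherits a $3$-distortion guarantee; feasibility (exactly $k$ centers, one per group) follows from the unit-demand reduction together with the left-perfect matching and the standard rule for unmatched groups.

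\textbf{Main obstacle.} The most delicate ingredient is transferring Theorem~\ref{corollary:lambda-ell-star-lower-bound}, originally stated for the output of Gonzalez's greedy algorithm, to the output of Burkhardt et~al.'s ordinal adaptation. The paper only remarks informally that the progressive $2$-cover property and the critical-index guarantees carry over; a careful proof has to verify that the ordinal farthest-point step still places the first $\ell^*$ centers in distinct optimal clusters while keeping $\cost(\InitSolSet_{\ell^*})$ within $2\,\cost(\OptSolSet)$. Once this transfer is cleanly established, the rest of the argument is bookkeeping with the triangle inequality and standard bipartite matching.
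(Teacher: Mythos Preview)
Your approach mirrors the paper's: distortion via Corollary~\ref{corollary:conventional-lambda-ell-star-lower-bound} (with the triangle-inequality unpacking you give), and query count by tallying the obvious sources. The distortion argument is fine and in fact more explicit than what the paper writes.

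There is, however, a gap in your query accounting. After the loops terminate, the algorithm must select $\tilde{\ell}=\argmin_\ell\bigl(\cost(\SolSet_\ell)+\lambda_\ell\bigr)$, and this requires knowing $\cost(\SolSet_\ell)$ (or, as a proxy, $\cost(\InitSolSet_\ell)$) for every $\ell\in[k]$. These costs are not determined by the distances already queried: for each $\ell$ one must additionally query, for each of the $\ell$ centers in $\InitSolSet_\ell$, the distance to the ordinally-furthest point assigned to it, costing $\ell$ queries. The paper counts this third contribution as $\sum_{\ell=1}^{k}\ell=\tfrac{k(k+1)}{2}$, and the total becomes $\tfrac{k^2-k}{2}+k^2+\tfrac{k(k+1)}{2}=2k^2$ exactly. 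Your sum $\tfrac{k^2-k}{2}+k^2$ sits below $2k^2$ only because you omitted this term; your statement that ``the inner loops \ldots\ need no further queries'' is correct for the loops themselves but does not cover the final $\argmin$. Once you add this piece back, the bound is tight rather than slack, and the proof is complete.
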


\section{$5$-Distortion with $\bigO(k\log^2 k)$ queries}
\label{sec:analysis-ordinal-fair-k-center-with-limited-distance-queries}

Naturally, we would like to reduce the number of distance queries while maintaining the same distortion. However, the matching-based approach requires first obtaining a sufficiently good solution to the $k$-center problem and then mapping it to a feasible solution of the fair $k$-center problem. If we stick to this approach, we need to reduce the number of queries required to find a $2$-approximate solution for $k$-center, improving upon the results of~\citet{burkhardt2024low}; which, to our best knowledge, is still an open problem.

This motivates the following \emph{less ambitious}, yet still meaningful, question: \emph{Can we design a deterministic constant distortion algorithm for the ordinal fair $k$-center problem that uses $o(k^2)$ distance queries?}
In this section, we design such an algorithm, answering it in the affirmation. The algorithm we design starts from Burkhardt et al.'s $4$-distortion $k$-center algorithm~\cite{burkhardt2024low}, which uses only $2k$ distance queries. For completeness, we restate the result in Theorem~\ref{theorem:ordinal-k-center-algorithm-smaller-queries}.

\begin{theorem}[Theorem 3.3~\cite{burkhardt2024low}]
    \label{theorem:ordinal-k-center-algorithm-smaller-queries}
    There exists a deterministic $4$-distortion algorithm for $k$-center that makes $2k$ distance queries. 
\end{theorem}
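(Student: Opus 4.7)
The plan is to adapt Gonzalez's greedy farthest-point $k$-center algorithm to the ordinal setting, using at most two distance queries per iteration to identify a point that is approximately farthest from the current center set. The degradation from the $2$-distortion guarantee of classical Gonzalez with full cardinal access to a $4$-distortion guarantee here should reflect the loss incurred from working with ordinal rather than cardinal information at each greedy step.

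My algorithm would start by selecting an arbitrary point as $c_1$. At iteration $i+1$, given the current centers $T = \{c_1, \ldots, c_i\}$, the classical Gonzalez rule picks $\arg\max_p \min_{c \in T} d(p,c)$, which in general requires knowing all pairwise distances. Using ordinal information alone, however, for each current center $c_j$ the last point in $\succ_{c_j}$ is the farthest point in $U$ from $c_j$. A natural strategy is to maintain, across iterations, a ``champion'' candidate together with one tracked distance estimate, and at each new iteration compare it against a fresh candidate drawn from the ranking of the most recently added center $c_i$. Two queries per iteration then suffice: one to evaluate the new candidate's distance to $c_i$ (the only center against which a previously maintained estimate would be stale), and one to compare against the champion. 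Summing over $k-1$ iterations yields the $2k$ bound.

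For the analysis, I would follow the standard Gonzalez argument with a factor-of-$2$ slack to accommodate the ordinal approximation. Let $r = \cost(S^*)$ denote the optimal cost and let $\Pi^*$ be the induced optimal partition into $k$ clusters. Either every optimal cluster contains at least one selected center---in which case every point lies within $2r$ of some center by triangle inequality---or by pigeonhole some cluster contains two selected centers $c_a, c_b$ with $a < b$, giving $d(c_a, c_b) \le 2r$. Since the ordinal-based choice at iteration $b$ may under-approximate the true Gonzalez farthest distance by at most a factor of $2$ (because the farthest point in a single ranking $\succ_{c_j}$ need not be the farthest from the full set $T$, but it is at least as far from $c_j$ as any other candidate), the farthest unresolved point in the final solution is then within $4r$ of some selected center.

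The main obstacle is the rigorous bounding of the loss from ordinal selection, and in particular justifying that maintaining only a single champion with one tracked distance (rather than recomputing all $i$ minimum-distances at iteration $i+1$) does not further degrade the distortion. I expect this to hinge on a careful triangle-inequality argument relating the last point in $\succ_{c_j}$ to the true Gonzalez choice, combined with a monotonicity property of the greedy radius across iterations that ensures the champion remains a valid $2$-approximate farthest point after each update.
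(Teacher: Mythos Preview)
This theorem is quoted from \cite{burkhardt2024low} and is not proved in the present paper; it is used here as a black box. The only thing the paper exposes about the underlying algorithm is the restated Lemma~3.6 of \cite{burkhardt2024low} (appearing in the appendix as Lemma~\ref{lemma:init-sol-mem-lower-bound-burkhardt}): the center $t_{i+1}$ chosen at step $i{+}1$ satisfies $d(t_{i+1},T_i)\ge \tfrac{1}{2}\max_{u\in U} d(u,T_i)$. Your high-level plan---run Gonzalez with a $\tfrac12$-approximate farthest point at each step, then redo the pigeonhole argument with the extra factor of~$2$---matches exactly this property, and your distortion analysis (either all optimal clusters are hit, or two centers share a cluster and are within $2r$, whence every point is within $4r$) is the correct one.

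Where your proposal is genuinely incomplete is the mechanism for achieving the $\tfrac12$-approximate farthest point with only two queries per iteration. Your ``champion plus one fresh candidate from the newest center's ranking'' idea is not spelled out enough to verify: you never say what the champion is initialized to, how its distance to the \emph{entire} current set $T_i$ (not just to a single $c_j$) is maintained across iterations, or why comparing it to a single new candidate preserves the $\tfrac12$ guarantee. The difficulty is that the farthest point in $\succ_{c_i}$ from $c_i$ can be arbitrarily close to some earlier $c_j$, so its distance to $T_i$ need not approximate $\max_u d(u,T_i)$ at all; one must argue about a more carefully chosen candidate. Since the paper gives no further detail, you would need to consult \cite{burkhardt2024low} directly to fill this in; as written, your query-count argument has a gap at precisely the step you flag as ``the main obstacle.''
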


Succinctly, we denote by \bcfrss the algorithm of \citet{burkhardt2024low} corresponding to~\Cref{theorem:ordinal-k-center-algorithm-smaller-queries}.
The main challenge consists of two parts: (i) finding a $\hat{\ell}$ such that mapping the solution $T_{\hat{\ell}}$ to the conventional $k$-center without the fairness constraints to the feasible solution $S$ according to bipartite matching obtains a $5$-approximation result; (ii) regarding the bipartite matching part, we need to adapt their algorithm to the ordinal model that can efficiently construct the bipartite graph $H^{\hat{\ell}}_{\lambda}$ with the smallest possible $\lambda$ that admits a left-perfect matching, given only access to ordinal information and a limited number of distance queries.
We will show that both parts can be achieved with significantly fewer queries than the naive $\bigO(k^2)$ bound.

The pseudocode of our main algorithm is presented in~\Cref{alg:ordinal-fair-k-center-quality-ell}. 
We structure our analysis in two parts. 
In the first part (Section~\ref{sec:4-approximate-k-center-algorithm}--\ref{sec:finalphase}), we present our algorithm, which proceeds in three phases. 
For each phase, we analyze the number of calls to costly subroutines (measured in terms of distance queries) needed and establish the distortion guarantee of the algorithm.
In the second part (Section~\ref{sec:evaluating-the-predicate}--\ref{section:finding-minimum-lambda}), we derive precise bounds on the query complexity of these subroutines, and put everything together to obtain the query complexity of the whole algorithm.
Next, we start with a high-level summary of the analysis.
\subsection{Overview}
\sbpara{\large Part $1$: The algorithm}

\sbpara{Initial phase: } In line~\ref{algo:init}, we compute a set $\InitSolSet$ of $k$ centers using \citet{burkhardt2024low}'s $4$-distortion algorithm (\Cref{theorem:ordinal-k-center-algorithm-smaller-queries}) that takes $2k$ distance queries.

\sbpara{Main Phase: } (lines~\ref{algo:pred}-\ref{algo:mainend})
This is the crucial phase of our algorithm and it starts by defining the predicate $\mathsf{P}(\ell) ~\equiv~ \bigl( 4\,\lambda_{\ell} \leq \cost(\InitSolSet_{\ell}) \bigr)$. 
This phase relies on the fact that, with respect to $\ell$, $\lambda_{\ell}$ is non-decreasing and $\cost(\InitSolSet_{\ell})$ is non-increasing (\Cref{lemma:monotonicity-properties-of-lambda-ell}). 
    Based on the monotonicity properties, we note that the predicate $\mathsf{P}(\ell)$ is monotone in $\ell$, it is initially true and becomes false after some $L$.
    The goal of the algorithm is to output $\hat{\ell}$ such that $\lambda_{\hat{\ell}} \leq \cost(\OptSolSet)$ and $\cost(\InitSolSet_{\hat{\ell}}) \leq 4\, \cost(\OptSolSet)$.
    Considering the corner cases, the mainphase itself is divided into two parts.

\noindent (i) \textit{Corner-cases} part (lines~\ref{algo:base1}-\ref{algo:basek}):
This part corresponds to the corner cases of the predicate  $\mathsf{P}(\ell) ~\equiv~ \bigl( 4\,\lambda_{\ell} \leq \cost(\InitSolSet_{\ell}) \bigr)$ defined in~\cref{algo:pred}. 
If $P(1)$ is false then it implies case (a) of~\Cref{lemma:monotonicity-of-lambda-and-cost-init-sol-set} holds, and hence, $\cost(\InitSolSet_{1}) \leq 4\,\lambda_{1} \leq 4\,\cost(\OptSolSet)$. 
On the other hand, if $P(k)$ is true then it implies case (b) of~\Cref{lemma:monotonicity-of-lambda-and-cost-init-sol-set} holds, and hence $4\,\lambda_{k} \leq \cost(\InitSolSet_{k}) \leq 4\,\cost(\OptSolSet)$.  

\noindent(ii) \textit{Binarysearch part} (lines~\ref{algo:binstart}-\ref{algo:binend}): In this part, the algorithm performs a binary search on $\{2,\dots, k-1\}$ to find the largest $L$ such that $\mathsf{P}({L})$ is true, but $\mathsf{P}({L}+1)$ is false. 
Furthermore, according to (c) of \Cref{lemma:monotonicity-of-lambda-and-cost-init-sol-set}, $L$ satisfies $\min \{\cost(\InitSolSet_{L}), 4 \lambda_{L+1}\} \leq 4\,\cost(\OptSolSet)$. 
Thus by setting $\hat{\ell} = L$ or $\hat{\ell} = L+1$, we have $\lambda_{\hat{\ell}} \leq \cost(\OptSolSet)$ and $\cost(\InitSolSet_{\hat{\ell}}) \leq 4\, \cost(\OptSolSet)$. 
This requires $O(\log k)$ many evaluations of the predicate $\mathsf{P}$, by binary search on $\ell$. 
To compare $\lambda_{L+1}$ and $4\,\cost(\InitSolSet_{L})$, the algorithm calls the subroutine \findlambda\ that computes $\lambda_{L+1}$ \emph{only once} as in Section~\ref{section:finding-minimum-lambda}.

\sbpara{Final Phase: } (lines~\ref{algo:finphst}-\ref{algo:finalphend}) 
In this phase, first the algorithm computes $\lambda_{\hat{\ell}}$ by calling the \findlambda\ subroutine;  the goal is to construct a $(\hat{\ell}, \lambda_{\hat{\ell}})$-projection graph $H^{\hat{\ell}}_{\lambda_{\hat{\ell}}}$ that maps the initial solution $\InitSolSet_{\hat{\ell}}$ to the feasible solution $\SolSet$.
Furthermore, we can bound $\cost(\SolSet) \leq \lambda_{\hat{\ell}} + \cost(\InitSolSet_{\hat{\ell}}) \leq 5\, \cost(\OptSolSet)$, yielding a $5$-distortion result.
We discuss the analysis in detail in in \Cref{theorem:get-feasible-solution}.

\sbpara{\large Part $2$: Query complexity}

\noindent To bound the query complexity, we need to show that given $\ell \in [k]$, how to evaluate $\mathsf{P}(\ell)$, and how to compute $\lambda_\ell$, both efficiently. 
Notice that obtaining the final solution $\SolSet$ does not induce additional queries, as the queries needed to compute $\SolSet$ have already been made when computing $\lambda_{\ell}$; i.e., when constructing the bipartite graph $H^{\ell}_{\lambda_{\ell}}$ and finding the left-perfect matching, the finial solution $\SolSet$ is implicitly obtained from $\InitSolSet_{\ell}$ through the matching.  
First, we show an efficient subroutine to evaluate $\mathsf{P}(\ell)$ for a given $\ell$ with only $\bigO(\ell \log \ell)$ queries.
For computing $\lambda_\ell$, we design a subroutine \findlambda, that uses a \mom\ subroutine, which is based on a binary search approach of finding a \emph{weighted} median of medians.
Note that there are $\ell k$ many values possible for $\lambda_\ell$, corresponding to $\ell$ elements of $\InitSolSet_\ell$, which is expensive to query. However, since we know the linear orderings of of $d(\SolMem, \Group)$ for any fix $\SolMem \in \InitSolSet_\ell$ and all $\Group \in \MultiGroupSet$ in advance, the subroutine can compute the weighted median of the medians of these orderings using only $\ell$ queries; in addition, it takes another $\ell \log k$ queries to reduce the search space. Furthermore, after each such computation, the subroutine reduces the search space by a quarter, resulting in $\bigO(k \log ^2 k)$ queries to find $\lambda_\ell$.
Therefore, the query complexity of~\Cref{alg:finding-lambda-ell} is $\bigO(k\log ^2 k)$, since the \textbf{while} loop of the main phase runs $\bigO(\log k)$ times, and each time it evaluates $\mathsf{P}$ and performs other operations, which require overall $\bigO(k\log^2 k)$ queries.

\subsection{Initial Phase}\label{sec:4-approximate-k-center-algorithm}
The first step of the algorithm is to obtain a set $\InitSolSet$ using~\citet{burkhardt2024low} algorithm, \bcfrss, from~\Cref{theorem:ordinal-k-center-algorithm-smaller-queries}, ignoring the group fairness constraints of $\Ical$.
We start our analysis by introducing the following lemma that captures the properties of $\InitSolSet$. 
For $\ell\in [k]$ and $\lambda \in \mathbb{R}_+$, recall that $H^{\ell}_{\lambda}$ is the $(\ell,\lambda)$-projection graph. Also, when $\ell \in [k]$ is fixed, $\lambda_{\ell}$ is the minimum value for which there exists a {left-perfect matching} on $H^{\ell}_{\lambda}$. 

\begin{restatable}{lemma}{lambdaellstarlowerboundsmallerqueries}
    \label{lemma:lambda-ell-star-lower-bound-smaller-queries}
    The set $\InitSolSet$ returned by \bcfrss on instance $\Ical=\ordfairkcenins$ of \ordfairkcen is
    a progressive $4$-cover for $\Ical$.
    Furthermore, for the critical index $\ell^* \in [k]$ of \OptSolSet, it holds that
    $\lambda_{\ell^*} \leq \cost(\OptSolSet)$, and $\cost(\InitSolSet_{\ell^*}) \leq 4\, \cost(\OptSolSet)$, and hence $\cost(\InitSolSet_{\ell^*}) \leq 4\lambda_{\ell^*} + \cost(\InitSolSet_{\ell^*}) \leq 5\, \cost(\OptSolSet)$.
\end{restatable}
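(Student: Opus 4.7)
The plan is to mirror the proof of the Gonzalez analogue (\Cref{corollary:lambda-ell-star-lower-bound}) and upgrade each step to the ordinal $4$-distortion setting of \bcfrss. The argument splits into three ingredients: (a) $T$ is a progressive $4$-cover, (b) the critical index $\ell^*$ satisfies $\lambda_{\ell^*} \le \cost(\OptSolSet)$, and (c) the two bounds combine into the displayed chain.

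For (a), I would first observe that by \Cref{theorem:ordinal-k-center-algorithm-smaller-queries}, $\cost(T) \le 4\,\cost(S^{*}_{\mathrm{kc}})$, where $S^{*}_{\mathrm{kc}}$ is the unconstrained optimum $k$-center; since $\OptSolSet$ is feasible for the unconstrained problem, $\cost(S^{*}_{\mathrm{kc}}) \le \cost(\OptSolSet)$, hence $\cost(T) \le 4\,\cost(\OptSolSet)$. If $T$ already hits each part of $\mathbf{\Pi}^{*}$ at most once then $\ell=k$ trivially witnesses the cover. Otherwise I define $\ell^{*}$ as the largest index such that $T_{\ell^*}$ hits each part of $\mathbf{\Pi}^{*}$ at most once. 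Maximality forces $t_{\ell^{*}+1}$ to share a part of $\mathbf{\Pi}^{*}$ with some $t_i \in T_{\ell^*}$, so by triangle inequality through the common $\OptSolSet$-center, $d(t_i, t_{\ell^{*}+1}) \le 2\,\cost(\OptSolSet)$. Since \bcfrss is a Gonzalez-style greedy that picks each new center approximately farthest from the current set, I would use this structural property to bound $\cost(T_{\ell^*})$ by a small constant times $d(t_i,t_{\ell^{*}+1})$, concluding $\cost(T_{\ell^*})\le 4\,\cost(\OptSolSet)$.

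For (b), I would exhibit a left-perfect matching on $H^{\ell^{*}}_{\cost(\OptSolSet)}$ directly. By the reduction in Section~\ref{sec:problem-formulation}, the instance has exactly $k$ unit-requirement groups and $\OptSolSet$ places precisely one center in each group. For every $t \in T_{\ell^*}$, the unique part of $\mathbf{\Pi}^{*}$ that $t$ hits is associated with a distinct $\OptSolSet$-center $c(t)$; map $t$ to the group $G_{c(t)} \in \GroupSet$ containing $c(t)$. This map is injective, because $T_{\ell^*}$ hits distinct parts of $\mathbf{\Pi}^{*}$ and distinct parts correspond to distinct groups, and each edge has weight $d(t, G_{c(t)}) \le d(t, c(t)) \le \cost(\OptSolSet)$. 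Hence a left-perfect matching exists at $\lambda = \cost(\OptSolSet)$, giving $\lambda_{\ell^*} \le \cost(\OptSolSet)$.

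For (c), combining (a) and (b) gives $\cost(T_{\ell^*}) \le \lambda_{\ell^*} + \cost(T_{\ell^*}) \le \cost(\OptSolSet) + 4\,\cost(\OptSolSet) = 5\,\cost(\OptSolSet)$, which is the form used in the final phase when the projection graph converts $T_{\ell^*}$ to a fair solution. The main obstacle I anticipate is step (a): unlike exact Gonzalez, \bcfrss does not select the globally farthest point but only a near-farthest one identified via a small number of ordinal-guided queries, so I have to carefully open up the \bcfrss internals and track how the factor-$4$ relaxation propagates into the cost of the prefix $T_{\ell^*}$. Steps (b) and (c) are then routine given (a) and the reduction already in the paper.
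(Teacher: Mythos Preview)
Your proposal is correct and takes essentially the same approach as the paper. The ``internals of \bcfrss'' you anticipate needing in step (a) is exactly what the paper invokes: Lemma~3.6 of \citet{burkhardt2024low}, which states that the $(i{+}1)$-th center satisfies $d(t_{i+1}, T_i) \ge \tfrac{1}{2}\max_{u \in U} d(u, T_i)$; combining this half-farthest guarantee with your bound $d(t_i, t_{\ell^*+1}) \le 2\,\cost(\OptSolSet)$ immediately gives $\cost(T_{\ell^*}) \le 2\,d(t_{\ell^*+1}, T_{\ell^*}) \le 2\,d(t_{\ell^*+1}, t_i) \le 4\,\cost(\OptSolSet)$, and your steps (b) and (c) then match the paper's argument essentially verbatim.
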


\begin{algorithm}[]
\SetAlgoNoEnd
    \caption{$5$-Distortion Algorithm}
    \label{alg:ordinal-fair-k-center-quality-ell}
    \SetAlgoLined
    \KwIn{An instance of \ordfairkcen $\Ical=\ordfairkcenins$ }
    \KwOut{Solution $\SolSet$ to $\Ical$}

   Compute $k$ centers, \InitSolSet through Burkhardt et al.~\cite{burkhardt2024low}'s $5$-distortion algorithm (\Cref{theorem:ordinal-k-center-algorithm-smaller-queries})\label{algo:init}

   Define predicate 
    $\mathsf{P}(\ell) ~\equiv~ \bigl( 4\,\lambda_{\ell} \leq \cost(\InitSolSet_{\ell}) \bigr)$,
     for $\ell\in[k]$.\label{algo:pred}

    \tcp{Recall that $\lambda_{\ell}$ denotes the minimum $\lambda$ such that there exists a left-perfect matching on $H^{\ell}_{\lambda}$ and $\cost(\InitSolSet_{\ell})=\max_{u\in U} d(u,\InitSolSet_{\ell})$, the predicate checks whether $4\,\lambda_{\ell} \leq \cost(\InitSolSet_{\ell})$.}

    \tcp{Binary search on $\ell$ to find the largest $\ell$ such that $\mathsf{P}(\ell)$ holds.}

    \lIf{$\mathsf{P}(1)$ is false}{$\hat{\ell} \leftarrow 1$}\label{algo:base1}

    \lIf{$\mathsf{P}(k)$ is true}{$\hat{\ell} \leftarrow k$}\label{algo:basek}

    \Else{
        Set $L\leftarrow 1$ and $R\leftarrow k$\;\label{algo:binstart}
        
        \While{$L<R$}{
            $M \leftarrow \bigl\lfloor (L+R+1)/2 \bigr\rfloor$

            \lIf{$\mathsf{P}(M)$}{
                $L\leftarrow M$
            }\lElse{
                $R\leftarrow M-1$
            }
        }\label{algo:binend}
        $\lambda_{L+1} \gets \findlambda(\Ical, L+1, T_{L+1})$\;\label{algo:subph2st} %
        
        \lIf{$\cost(\InitSolSet_{L}) \leq 4 \lambda_{L+1}$}{
            $\hat{\ell}\leftarrow L$.
        }\lElse{
            $\hat{\ell}\leftarrow L+1$\label{algo:mainend}
        }
    }

    $\lambda_{\hat{\ell}}\gets \findlambda(\Ical, \hat{\ell}, T_{\hat{\ell}})$\;\label{algo:finphst}

    $\SolSet \gets $ solution obtained from~\Cref{theorem:get-feasible-solution}\;\label{algo:finalphend}

    Return $\SolSet$
\end{algorithm}

Slightly abusing the terminology, we say that $\lambda_\ell$ and $\cost(\InitSolSet_{\ell})$ are functions of $\ell$. The following lemma says that both $\lambda_{\ell}$ and $\cost(\InitSolSet_{\ell})$ are monoton in $\ell$, which is our key idea in designing binary search on $\ell$ by defining the predicate $\mathsf{P}(\ell) ~\equiv~ \bigl( 4\,\lambda_{\ell} \leq \cost(\InitSolSet_{\ell}) \bigr)$. 

\begin{restatable}{lemma}{monotonicitypropertiesoflambdaell}
    \label{lemma:monotonicity-properties-of-lambda-ell}
    Consider the set $\InitSolSet$ returned by \bcfrss. Then,    $\lambda_{\ell}$ is a non-decreasing function in $\ell$ and $\cost(\InitSolSet_{\ell})$ is a non-increasing function in $\ell$.
\end{restatable}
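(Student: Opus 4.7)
The plan is to argue each monotonicity statement separately via simple set-inclusion arguments; neither relies on the specific structure of the algorithm \bcfrss beyond the fact that $\InitSolSet$ is an ordered set of size $k$.

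For the non-increasing property of $\cost(\InitSolSet_{\ell})$, I will use the fact that $\InitSolSet_\ell \subseteq \InitSolSet_{\ell+1}$ as (unordered) sets, since the latter is obtained by appending $t_{\ell+1}$. Hence for each $u \in U$, the value $\min_{s \in \InitSolSet_{\ell+1}} d(u,s)$ is a minimum over a superset and therefore is at most $\min_{s \in \InitSolSet_\ell} d(u,s)$. Taking the maximum over $u \in U$ preserves this inequality, yielding $\cost(\InitSolSet_{\ell+1}) \leq \cost(\InitSolSet_\ell)$.

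For the non-decreasing property of $\lambda_\ell$, the key observation is that the projection graph $H^{\ell}_\lambda$ is exactly the subgraph of $H^{\ell+1}_\lambda$ obtained by deleting the left vertex $t_{\ell+1}$ together with its incident edges; the edges between $\{t_1,\dots,t_\ell\}$ and the groups in $\MultiGroupSet$ are identical in both graphs, because they depend only on whether $d(t_i, \MultiGroup_j) \leq \lambda$, which is independent of $\ell$. Consequently, any left-perfect matching on $H^{\ell+1}_\lambda$, when restricted to the edges incident to $\{t_1,\dots,t_\ell\}$, is a left-perfect matching on $H^\ell_\lambda$. Therefore
\[
\{\lambda \geq 0 : H^{\ell+1}_\lambda \text{ admits a left-perfect matching}\} \;\subseteq\; \{\lambda \geq 0 : H^{\ell}_\lambda \text{ admits a left-perfect matching}\},
\]
and taking the infimum of each side gives $\lambda_\ell \leq \lambda_{\ell+1}$.

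The only formality worth verifying is that $\lambda_\ell$ is well-defined, i.e., that the right-hand set above is nonempty. This follows by taking $\lambda$ to be the diameter of $(U,d)$: every potential edge is then present, and the resulting complete bipartite graph $K_{\ell,k}$ with $\ell \leq k$ trivially admits a left-perfect matching. With this small check in place, both monotonicity claims reduce to elementary set-inclusion arguments, so I do not anticipate any substantial obstacle beyond making the definitions and the restriction operation on matchings fully explicit.
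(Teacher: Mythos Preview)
Your proposal is correct and follows essentially the same approach as the paper: both monotonicity claims are derived from the set inclusion $\InitSolSet_\ell \subseteq \InitSolSet_{\ell'}$ for $\ell < \ell'$, with the cost inequality following from minimizing over a larger set and the $\lambda$ inequality following from the fact that a left-perfect matching on $H^{\ell'}_{\lambda}$ restricts to one on $H^{\ell}_{\lambda}$. Your added well-definedness check for $\lambda_\ell$ is a nice touch that the paper omits but does not change the substance of the argument.
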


We put the proofs of \Cref{lemma:lambda-ell-star-lower-bound-smaller-queries} and \Cref{lemma:monotonicity-properties-of-lambda-ell} to the appendix.

\subsection{Main Phase}\label{sec:mainphase}
In the \Cref{sec:4-approximate-k-center-algorithm}, 
as a direct implication of Lemma~\ref{lemma:lambda-ell-star-lower-bound-smaller-queries}, for $\InitSolSet$ obtained from \bcfrss on \ordfairkcen instance $\Ical$, we know that 
there exists an $\ell^*$ which holds that 
$\lambda_{\ell^*} \leq \cost(\OptSolSet)$ and $\cost(\InitSolSet_{\ell^*}) \leq 4\, \cost(\OptSolSet)$, where $S^*$ is an optimal solution for $\Ical$.
However, it is not clear how to find $\ell^*$ as the algorithm does not know the optimal partition $\Pi^*$ in advance. 
As a result, we opt to find an alternative $\hat{\ell}$ that satisfies $\lambda_{\hat{\ell}} \leq \cost(\OptSolSet)$ and $\cost(\InitSolSet_{\hat{\ell}}) \leq 4\, \cost(\OptSolSet)$ as well, this constitutes the main goal of this phase. 

Here we define a predicate $\mathsf{P}(\ell) ~\equiv~$ $( 4\,\lambda_{\ell} \leq \cost(\InitSolSet_{\ell}))$, for $\ell\in[k]$. 
The monotonicity of $\mathsf{P}(\ell)$ facilitates a binary search on $\ell$ to find such $L$ such that $\mathsf{P}(L)$ is true and $\mathsf{P}(L+1)$ is false.
Furthermore, we will show that either $L$ or $L+1$ is the desired $\hat{\ell}$. 
We present our algorithm in \Cref{alg:ordinal-fair-k-center-quality-ell}. 
Let us first present our main result of the phase, \Cref{lemma:binary-search-on-ell}, which states the correctness of the algorithm, and also serves as a framework for the analysis of the query complexity.

\begin{theorem}
    \label{lemma:binary-search-on-ell}
    In Algorithm~\ref{alg:ordinal-fair-k-center-quality-ell}, it holds that $\lambda_{\hat{\ell}} \leq \cost(\OptSolSet)$ and $\cost(\InitSolSet_{\hat{\ell}}) \leq 4\, \cost(\OptSolSet)$.
    In addition, the algorithm needs to evaluate the predicate at most $\lceil \log(k+1) \rceil +2$ times.
\end{theorem}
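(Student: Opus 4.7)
The plan is to combine Lemmas~\ref{lemma:lambda-ell-star-lower-bound-smaller-queries} and~\ref{lemma:monotonicity-properties-of-lambda-ell} with a case analysis matching the three branches of Algorithm~\ref{alg:ordinal-fair-k-center-quality-ell}. I would fix an optimal solution $\OptSolSet$ and let $\ell^*$ be the critical index of $\InitSolSet$ produced by \bcfrss, so that $\lambda_{\ell^*}\le \cost(\OptSolSet)$ and $\cost(\InitSolSet_{\ell^*})\le 4\cost(\OptSolSet)$. Combining this with the monotonicity from Lemma~\ref{lemma:monotonicity-properties-of-lambda-ell} ($\lambda_{\ell}$ non-decreasing and $\cost(\InitSolSet_{\ell})$ non-increasing in $\ell$) gives the useful global facts $\lambda_{\ell}\le \cost(\OptSolSet)$ for every $\ell\le \ell^*$, and $\cost(\InitSolSet_{\ell})\le 4\cost(\OptSolSet)$ for every $\ell\ge \ell^*$. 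These facts also imply that the predicate $\mathsf{P}$ is monotone in $\ell$, which is what justifies the binary search.

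For the two corner cases, I would argue directly. If $\mathsf{P}(1)$ is false, then $\cost(\InitSolSet_1)<4\lambda_1\le 4\lambda_{\ell^*}\le 4\cost(\OptSolSet)$ and $\lambda_1\le \lambda_{\ell^*}\le \cost(\OptSolSet)$, so $\hat{\ell}=1$ satisfies both inequalities. If $\mathsf{P}(k)$ is true, then $4\lambda_k\le \cost(\InitSolSet_k)\le \cost(\InitSolSet_{\ell^*})\le 4\cost(\OptSolSet)$, so $\lambda_k\le \cost(\OptSolSet)$ and $\cost(\InitSolSet_k)\le 4\cost(\OptSolSet)$, validating $\hat{\ell}=k$.

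The main work is the binary-search branch, where the loop terminates with an index $L\in[1,k-1]$ such that $\mathsf{P}(L)$ holds and $\mathsf{P}(L+1)$ fails. The hinge of the argument is the inequality $\min\{\cost(\InitSolSet_L),\,4\lambda_{L+1}\}\le 4\cost(\OptSolSet)$, which I would prove by splitting on the position of $\ell^*$: if $L\ge \ell^*$, monotonicity gives $\cost(\InitSolSet_L)\le \cost(\InitSolSet_{\ell^*})\le 4\cost(\OptSolSet)$; if $L<\ell^*$, then $L+1\le \ell^*$ and $4\lambda_{L+1}\le 4\lambda_{\ell^*}\le 4\cost(\OptSolSet)$. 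Given this $\min$-bound, the two \textbf{if}-branches of the algorithm directly yield the desired inequalities: (i) if $\cost(\InitSolSet_L)\le 4\lambda_{L+1}$, then $\hat{\ell}=L$ and $\cost(\InitSolSet_L)$ attains the $\min$, so $\cost(\InitSolSet_L)\le 4\cost(\OptSolSet)$, while $\mathsf{P}(L)$ gives $\lambda_L\le \cost(\InitSolSet_L)/4\le \cost(\OptSolSet)$; (ii) otherwise $\hat{\ell}=L+1$ and $4\lambda_{L+1}$ attains the $\min$, so $\lambda_{L+1}\le \cost(\OptSolSet)$, while the failure of $\mathsf{P}(L+1)$ gives $\cost(\InitSolSet_{L+1})<4\lambda_{L+1}\le 4\cost(\OptSolSet)$.

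Finally, for the evaluation count, the two corner checks contribute $2$ predicate evaluations, and the binary-search loop over the range $[1,k]$ halves its length every iteration, so it terminates after at most $\lceil \log(k+1)\rceil$ evaluations of $\mathsf{P}$ on indices strictly inside $[1,k]$ (hence no re-evaluation of $\mathsf{P}(1)$ or $\mathsf{P}(k)$); summing gives the stated bound $\lceil \log(k+1)\rceil+2$. The main obstacle I anticipate is cleanly formulating and proving the $\min$-bound for the middle case, since it is what reconciles the algorithm's actual decision rule with the existential guarantee obtained from $\ell^*$; once that bound is in place, the remaining inequalities are mechanical consequences of the predicate's truth values at $L$ and $L+1$.
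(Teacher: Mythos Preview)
Your proposal is correct and follows essentially the same approach as the paper: a three-branch case analysis matching the algorithm, driven by the monotonicity of $\lambda_\ell$ and $\cost(\InitSolSet_\ell)$ and the guarantees at the critical index $\ell^*$. The only structural difference is that the paper packages the $\min$-bound and the corner-case conclusions into a separate lemma (Lemma~\ref{lemma:monotonicity-of-lambda-and-cost-init-sol-set}) and then cites it, whereas you inline that argument directly; your write-up is in fact more explicit than the paper's in verifying that the algorithm's choice between $L$ and $L{+}1$ in the binary-search branch actually yields both inequalities $\lambda_{\hat\ell}\le\cost(\OptSolSet)$ and $\cost(\InitSolSet_{\hat\ell})\le 4\cost(\OptSolSet)$.
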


The monotonicity of $\mathsf{P}(\ell)$ relies on observation that $\lambda_{\ell}$ is a non-decreasing function in $\ell$ and $\cost(\InitSolSet_{\ell})$ is a non-increasing function in $\ell$, as in \Cref{lemma:monotonicity-properties-of-lambda-ell}.
The correctness of the output of the binary search on $\ell$ is established in \Cref{lemma:monotonicity-of-lambda-and-cost-init-sol-set}.
We leave the proof of \Cref{lemma:binary-search-on-ell} to the end of this section, and put the proof \Cref{lemma:monotonicity-of-lambda-and-cost-init-sol-set} in the appendix.

\begin{restatable}{lemma}{monotonicityoflambdacost}
    \label{lemma:monotonicity-of-lambda-and-cost-init-sol-set}
    Let $\InitSolSet$ be the solution of \bcfrss on an instance $\Ical=\ordfairkcenins$ of \ordfairkcen, ignoring the group fairness constraints. Also, let $S^*$ be an optimal solution to $\Ical$. 
    Then, at least one of the three cases  holds:
\begin{itemize}
    \item[\textbf{(a)}] $\cost(\InitSolSet_{1}) \leq 4\,\lambda_{1}$. Moreover, in this case, $\lambda_{1} \leq \cost(\OptSolSet)$. 
    \item[\textbf{(b)}] $4\,\lambda_{k} \leq \cost(\InitSolSet_{k})$. Moreover, in this case, $\cost(\InitSolSet_{k}) \leq 4\,\cost(\OptSolSet)$. 
    \item[\textbf{(c)}] There exists an $2 \leq \ell' \leq k-1$ such that
    \[
        4\lambda_{\ell'} \leq \cost(\InitSolSet_{\ell'})
        \quad \text{and} \quad
        \cost(\InitSolSet_{\ell'+1}) \leq 4\, \lambda_{\ell'+1}.
    \]
    Moreover, in this case, we have
    \[
        \min \left\{ \cost(\InitSolSet_{\ell'}),~ 4\lambda_{\ell'+1} \right\} \leq 4\,\cost(\OptSolSet).
    \]
\end{itemize}
\end{restatable}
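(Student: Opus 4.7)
The plan is to leverage two ingredients established earlier: the monotonicity from~\Cref{lemma:monotonicity-properties-of-lambda-ell} (namely that $\lambda_{\ell}$ is non-decreasing and $\cost(\InitSolSet_{\ell})$ is non-increasing in $\ell$), and the existence of a critical index $\ell^{*}$ from~\Cref{lemma:lambda-ell-star-lower-bound-smaller-queries} satisfying simultaneously $\lambda_{\ell^{*}}\le \cost(\OptSolSet)$ and $\cost(\InitSolSet_{\ell^{*}})\le 4\,\cost(\OptSolSet)$. A first observation I would record is that the predicate $\mathsf{P}(\ell)\equiv(4\lambda_{\ell}\le\cost(\InitSolSet_{\ell}))$ is monotone: if $\mathsf{P}(\ell)$ is false then $4\lambda_{\ell+1}\ge 4\lambda_{\ell}>\cost(\InitSolSet_{\ell})\ge \cost(\InitSolSet_{\ell+1})$, so $\mathsf{P}(\ell+1)$ is also false. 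Thus the set of indices on which $\mathsf{P}$ holds is a (possibly empty or full) prefix of $\{1,\dots,k\}$.

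Next I would do a case split based on the boundary values $\mathsf{P}(1)$ and $\mathsf{P}(k)$, which exhausts all possibilities and matches the three cases in the statement. For case \textbf{(a)}, assume $\mathsf{P}(1)$ is false; the defining inequality $\cost(\InitSolSet_{1})\le 4\lambda_{1}$ is immediate. Using monotonicity and the critical index guarantee, $\lambda_{1}\le \lambda_{\ell^{*}}\le \cost(\OptSolSet)$. For case \textbf{(b)}, assume $\mathsf{P}(k)$ is true; then $4\lambda_{k}\le \cost(\InitSolSet_{k})$ by definition, and by monotonicity of the cost, $\cost(\InitSolSet_{k})\le \cost(\InitSolSet_{\ell^{*}})\le 4\,\cost(\OptSolSet)$.

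For case \textbf{(c)}, I would consider the remaining regime in which $\mathsf{P}(1)$ is true and $\mathsf{P}(k)$ is false. By the monotonicity of $\mathsf{P}$, there is a largest $\ell'\in\{1,\dots,k-1\}$ with $\mathsf{P}(\ell')$ true and $\mathsf{P}(\ell'+1)$ false; because $\mathsf{P}(1)$ holds we can ensure $\ell'\ge 1$ (the wording $2\le\ell'$ is then adjusted after noting that, should $\ell'=1$ arise, the boundary hypotheses of case (a) or (b) would kick in — this minor bookkeeping is where I expect to be most careful). The two defining inequalities of case (c) now follow directly from the definitions of $\mathsf{P}(\ell')$ and $\neg\mathsf{P}(\ell'+1)$. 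To derive the final bound, I would split on where $\ell^{*}$ lies: if $\ell^{*}\le \ell'$, monotonicity of $\cost$ gives $\cost(\InitSolSet_{\ell'})\le \cost(\InitSolSet_{\ell^{*}})\le 4\,\cost(\OptSolSet)$; otherwise $\ell^{*}\ge \ell'+1$, and monotonicity of $\lambda$ gives $\lambda_{\ell'+1}\le \lambda_{\ell^{*}}\le \cost(\OptSolSet)$, so $4\lambda_{\ell'+1}\le 4\,\cost(\OptSolSet)$. In either subcase, the minimum is bounded as required.

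The main obstacle is not a deep argument but keeping the inequalities aligned across the three cases — in particular verifying that the three cases really are collectively exhaustive (via the monotonicity of $\mathsf{P}$) and that the critical-index guarantees from~\Cref{lemma:lambda-ell-star-lower-bound-smaller-queries} can be plugged in on the appropriate side in each subcase. Once monotonicity of $\mathsf{P}$ is pinned down, each case reduces to a one-line application of the critical-index inequalities.
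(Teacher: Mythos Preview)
Your approach is essentially identical to the paper's: invoke the monotonicity of $\lambda_\ell$ and $\cost(\InitSolSet_\ell)$ from \Cref{lemma:monotonicity-properties-of-lambda-ell}, split into the three cases via the boundary behavior of the predicate $\mathsf{P}$, and in each case plug in the critical-index inequalities from \Cref{lemma:lambda-ell-star-lower-bound-smaller-queries}, with case~(c) handled by a two-way split on whether $\ell^*\le\ell'$ or $\ell^*\ge\ell'+1$.

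One small wrinkle worth flagging: your parenthetical that ``should $\ell'=1$ arise, the boundary hypotheses of case~(a) or~(b) would kick in'' is not actually correct --- if $\mathsf{P}(1)$ is true and $\mathsf{P}(2)$ is false, neither (a) nor (b) need hold. The clean fix is simply to observe that your case~(c) argument (the split on $\ell^*\le\ell'$ versus $\ell^*\ge\ell'+1$) goes through verbatim for $\ell'=1$; the lower bound $\ell'\ge 2$ in the lemma statement plays no role in the proof, and the paper's own proof does not justify it either.
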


Next, we present the illustration of the binary search, which consists of the corner-case part and the binary search on $\ell$ part. 

\subsubsection{Corner-case part}\label{sec:basecase}
In this part, the algorithm checks if either $\mathsf{P}(1)$ is false or $\mathsf{P}(k)$ is true.
For the former case, we have that $\cost(\InitSolSet_{1}) < 4 \lambda_1$, which means $\cost(\InitSolSet_{1}) \leq 4\,\lambda_{1}$. Therefore,  case (a) of~\Cref{lemma:monotonicity-of-lambda-and-cost-init-sol-set}  holds,  hence $\lambda_1 \leq \cost(\OptSolSet)$.
In this case, the algorithm sets $\hat{\ell}=1$.
    When $P(k)$ is true, we have that case (b) of~\Cref{lemma:monotonicity-of-lambda-and-cost-init-sol-set} holds, hence it holds that $4\,\lambda_{k} \leq \cost(\InitSolSet_{k})$ and $\cost(\InitSolSet_{k}) \leq 4\,\cost(\OptSolSet)$.   In this case, the algorithm sets $\hat{\ell}=k$. In both cases, \Cref{lemma:binary-search-on-ell} is true.

\subsubsection{Binary search on $\ell$}
\label{sec:binary-search-on-ell}

To overcome the extended query complexity, we design a binary search approach on $\ell$. 
By the monotonicity of both $\lambda_{\ell}$ and $\cost(\InitSolSet_{\ell})$, the predicate, i.e., $P(\ell)$, is a non-increasing function in $\ell$. 
Namely, there exists such $L$ that $P(L)$ is true and $P(L+1)$ is false. 
Therefore, we can conduct a binary search on $\ell$ to find such $L$. 
Essentially, this case falls into case (c) of~\Cref{lemma:monotonicity-of-lambda-and-cost-init-sol-set}.
The algorithm outputs $\hat{\ell}$ as either $L$ or $L+1$: if $\cost(\InitSolSet_{L}) \leq \lambda_{L+1}$ it outputs $L$ and $L+1$ otherwise.

\smallskip 

Combining the \emph{corner-case part} and the \emph{binary search on $\ell$} part, we obtain a very good guarantee on the solution quality while using significantly fewer distance queries (as established in Section~\ref{sec:evaluating-the-predicate} and Section~\ref{section:finding-minimum-lambda}). 
We now formally prove \Cref{lemma:binary-search-on-ell}.

\textsc{Proof of~\Cref{lemma:binary-search-on-ell}.}
    Algorithm~\ref{alg:ordinal-fair-k-center-quality-ell} outputs $\hat{\ell}$ on three cases. 
    Case 1 that $P(1)$ is false, it implies that $\cost(\InitSolSet_{1}) < 4 \lambda_1$. 
    As a result, it holds true that $\cost(\InitSolSet_{1}) \leq 4\,\lambda_{1}$. 
    Since $\cost(\InitSolSet_{1}) \leq 4\,\lambda_{1}$ holds true, it implies that case (a) holds. 
    Case 2 that $P(k)$ is true directly implies case (b) holds $4\,\lambda_{k} \leq \cost(\InitSolSet_{k})$.  
    In Case 3, the algorithm conducts a binary search on $\ell$ and find the maximum $\ell$ (i.e. the $L$ in the algorithm) such that $P(\ell)$ holds. This implies case (c) holds $4\lambda_{\ell} \leq \cost(\InitSolSet_{\ell})$ and $\cost(\InitSolSet_{\ell + 1}) \leq 4\, \lambda_{\ell +1}$.
    In addition, the algorithm outputs $\hat{\ell}$ as either $L$ or $L+1$: if $\cost(\InitSolSet_{L}) \leq \lambda_{L+1}$ it outputs $L$ and $L+1$ otherwise.
    In all the three cases, the algorithm outputs $\hat{\ell}$ such that $\lambda_{\hat{\ell}} \leq \cost(\OptSolSet)$ and $\cost(\InitSolSet_{\hat{\ell}}) \leq 4\, \cost(\OptSolSet)$ holds.
    In addition, the algorithm needs to evaluate the predicate at most $\lceil \log(k+1) \rceil$ this is because each time in the binary search, the search space is halved.
\hfill\qed

\subsection{Final Phase}\label{sec:finalphase}
Once the algorithm computes $\hat{\ell}$ that satisfies~\Cref{alg:finding-lambda-ell}, it computes $\lambda_{\hat{\ell}}$ in~\Cref{algo:finphst}.
Next in~\Cref{algo:finalphend}, it computes a feasible solution $\SolSet_{\hat{\ell}}$ by mapping the left-perfect matching on $H^{\hat{\ell}}_{\lambda_{\hat{\ell}}}$ from $\InitSolSet_{\hat{\ell}}$ to $\MultiGroupSet$,  and obtain a $5$-approximate solution to the ordinal fair $k$-center problem.
We leave the proof to the appendix in the full version of the paper. 
 
\begin{restatable}{theorem}{getfeasiblesolution}
    \label{theorem:get-feasible-solution} 
    Given $\hat{\ell} \in [k]$ and $\lambda_{\hat{\ell}}$, we can construct the left-perfect matching on the $(\hat{\ell}, \lambda_{\hat{\ell}})$-projection graph $H^{\hat{\ell}}_{\lambda_{\hat{\ell}}}$, to obtain a feasible solution $\SolSet$ as follows:
    suppose that $s \in \InitSolSet_{\hat{\ell}}$ is matched to $\MultiGroup_i$ in the left-perfect matching, 
    add the point in $\MultiGroup_i$ that is closest to $s$ into $\SolSet$. 
    We then add one arbitrary point from the groups that are not matched from $\InitSolSet_{\hat{\ell}}$. 
    The solution $\SolSet$ is a feasible solution for the fair $k$-center problem with distortion at most $5$.
\end{restatable}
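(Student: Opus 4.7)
The plan is to verify two things about the set $\SolSet$ produced by the described rounding procedure: (i) it is a feasible solution to the reduced instance, and (ii) its social cost is at most $5\,\cost(\OptSolSet)$. Both will follow by cleanly combining the two bounds already established for $\hat{\ell}$ in Theorem~\ref{lemma:binary-search-on-ell}, namely $\cost(\InitSolSet_{\hat{\ell}}) \le 4\,\cost(\OptSolSet)$ and $\lambda_{\hat{\ell}} \le \cost(\OptSolSet)$, together with the defining property of $\lambda_{\hat{\ell}}$ that the projection graph $H^{\hat{\ell}}_{\lambda_{\hat{\ell}}}$ admits a left-perfect matching.

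For feasibility, I would recall the reduction from Section~\ref{sec:problem-formulation}: after reduction the instance has exactly $k$ groups $\MultiGroup_1,\dots,\MultiGroup_k$, each with unit requirement. The left-perfect matching on $H^{\hat{\ell}}_{\lambda_{\hat{\ell}}}$ pairs each of the $\hat{\ell}$ centers of $\InitSolSet_{\hat{\ell}}$ with a distinct group, and the rounding step adds the point of that group closest to the matched center, contributing $\hat{\ell}$ distinct centers, one from each matched group. Supplementing $\SolSet$ with an arbitrary point from each of the $k-\hat{\ell}$ unmatched groups yields $|\SolSet|=k$ with at least one center per group, so the fairness constraints $|\SolSet \cap \MultiGroup_i|\ge 1$ are all met.

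For the distortion bound I would use a one-step triangle inequality. Fix an arbitrary agent $u\in U$. Since $\cost(\InitSolSet_{\hat{\ell}}) \le 4\,\cost(\OptSolSet)$, there exists $s \in \InitSolSet_{\hat{\ell}}$ with $d(u,s) \le 4\,\cost(\OptSolSet)$. Let $\MultiGroup_i$ be the group matched to $s$, and let $s' \in \MultiGroup_i$ be the point closest to $s$, which is the center placed into $\SolSet$ by the rounding. Because $\{s,\MultiGroup_i\}$ is an edge of $H^{\hat{\ell}}_{\lambda_{\hat{\ell}}}$, we have $d(s,s') = d(s,\MultiGroup_i) \le \lambda_{\hat{\ell}} \le \cost(\OptSolSet)$. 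The triangle inequality then gives
\[
  d(u,\SolSet) \;\le\; d(u,s') \;\le\; d(u,s) + d(s,s') \;\le\; 4\,\cost(\OptSolSet) + \cost(\OptSolSet) \;=\; 5\,\cost(\OptSolSet).
\]
Taking a maximum over $u\in U$ gives $\cost(\SolSet)\le 5\,\cost(\OptSolSet)$.

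There is no real obstacle here; this statement is the clean wrap-up of the technical work done in the main and final phases. The only subtlety worth flagging is that the centers added for the unmatched $k-\hat{\ell}$ groups play no role in the cost analysis (the matched centers already cover every agent within $5\,\cost(\OptSolSet)$); they are needed solely to meet the unit fairness requirements of those groups. It is also worth noting that identifying the closest point of $\MultiGroup_i$ to $s$ is an ordinal operation that uses only $s$'s known ranking $\succ_s$, so the rounding step itself incurs no additional distance queries beyond those already spent in constructing $H^{\hat{\ell}}_{\lambda_{\hat{\ell}}}$ inside \findlambda.
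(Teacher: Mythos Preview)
Your proposal is correct and follows essentially the same approach as the paper: both invoke Theorem~\ref{lemma:binary-search-on-ell} for the two bounds $\cost(\InitSolSet_{\hat{\ell}})\le 4\,\cost(\OptSolSet)$ and $\lambda_{\hat{\ell}}\le\cost(\OptSolSet)$, and then combine them via the triangle inequality through the left-perfect matching. Your write-up is in fact more explicit than the paper's (which states only $\cost(\SolSet_{\hat{\ell}})\le \lambda_{\hat{\ell}}+\cost(\InitSolSet_{\hat{\ell}})$ without spelling out the per-agent argument), and your closing remarks on unmatched groups and on the rounding being purely ordinal are accurate and helpful.
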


\subsection{Evaluating the predicate} \label{sec:evaluating-the-predicate}
Now, we analyze how many distance queries are needed to evaluate the predicate $\mathsf{P}(\ell) ~\equiv~$ $( 4\,\lambda_{\ell} \leq \cost(\InitSolSet_{\ell}))$ for any given $\ell \in [k]$. 
A naive way to evaluate the predicate is to compute both $\lambda_{\ell}$ and $\cost(\InitSolSet_{\ell})$, and compare the values. 
However, computing each $\lambda_{\ell}$ is costly, as we later show in \Cref{section:finding-minimum-lambda} which costs $\bigO(k \log^2 k)$ distance queries and leading to $\bigO(k^3)$ distance queries in total.
An alternative approach is to view the predicate in its equivalent formulation, i.e., $P(\ell) ~\equiv~$ $( \lambda_{\ell} \leq \frac{1}{4} \cdot \cost(\InitSolSet_{\ell}))$.
This formulation essentially checks whether there exists a left-perfect matching on the $(\ell, \frac{1}{4} \cdot \cost(\InitSolSet_{\ell}))$-projection graph $H^{\ell}_{\frac{1}{4} \cdot \cost(\InitSolSet_{\ell})}$.

With this equivalent formulation, we note that evaluating the predicate only costs the distance queries in two parts. 
The first part is to construct the $(\ell, \frac{1}{4} \cdot \cost(\InitSolSet_{\ell}))$-projection graph, which takes at most $\ell \log(k)$ distance queries, we present the result in \Cref{lemma:construct-graph-queries} with more general setting on $\lambda$.  
The second part is to compute the cost of $\InitSolSet_{\ell}$, which takes $\ell$ distance queries to compute. 
Combining the two parts, we get \Cref{lemma:evaluate-predicate-queries}.  
We leave the proof to the appendix.

\begin{restatable}{lemma}{constructgraphqueries}
    \label{lemma:construct-graph-queries}
    Constructing the $(\ell, \lambda)$-projection graph $H^{\ell}_{\lambda}$ takes at most $\ell \log(k)$ distance queries, for any arbitrary $\lambda \geq 0$ and a fixed $\ell$.
    
\end{restatable}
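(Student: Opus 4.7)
The plan is to process each left vertex $s \in \InitSolSet_{\ell}$ independently and spend only $\lceil \log k \rceil$ queries on it, so that the total cost is at most $\ell\,\lceil \log k \rceil$. The key observation is that essentially all the information we need at vertex $s$ is already encoded in the linear order $\succ_s$, so the metric is only queried when comparing the distance $\lambda$ against actual distances from $s$.

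First I would show how to identify, using no queries at all, a single ``representative'' for each group. For a fixed $s \in \InitSolSet_{\ell}$ and each group $\MultiGroup_i \in \MultiGroupSet$, let $g_i^s$ denote the highest-ranked member of $\MultiGroup_i$ in $\succ_s$. By consistency of the ordinal profile with $d$, we have $d(s, g_i^s) = \min_{g \in \MultiGroup_i} d(s, g) = d(s, \MultiGroup_i)$. Hence the $(\ell,\lambda)$-edge between $s$ and $\MultiGroup_i$ exists if and only if $d(s, g_i^s) \le \lambda$. Finding $g_i^s$ for all $i \in [k]$ requires only a scan of $\succ_s$ and incurs no distance queries.

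Next, I would use the ordinal profile to sort the representatives. Since $\succ_s$ is consistent with $d$, ordering $g_1^s,\dots,g_k^s$ according to their positions in $\succ_s$ yields a permutation $g_{(1)}^s,\dots,g_{(k)}^s$ such that
\[
d(s, g_{(1)}^s) \le d(s, g_{(2)}^s) \le \cdots \le d(s, g_{(k)}^s).
\]
Now the task of finding the edges incident to $s$ reduces to locating the largest index $j$ for which $d(s, g_{(j)}^s) \le \lambda$; the neighbors of $s$ in $H^{\ell}_{\lambda}$ are then exactly the groups $\MultiGroup_{(1)},\dots,\MultiGroup_{(j)}$ corresponding to the first $j$ representatives (with the convention $j=0$ if $d(s, g_{(1)}^s) > \lambda$). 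Because the sequence $\{d(s, g_{(i)}^s)\}_{i=1}^k$ is monotone, a standard binary search over $\{1,\dots,k\}$ finds this cutoff $j$ using at most $\lceil \log k \rceil$ queries to $d$.

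Summing over the $\ell$ left vertices yields $\ell\,\lceil \log k \rceil$ queries in total, establishing the bound. The only delicate point is justifying that the positions of the $g_i^s$ in $\succ_s$ really do induce a sorting consistent with the true distances, but this is an immediate consequence of the definition of a consistent ordinal profile and does not require any access to $d$; no further subtleties arise.
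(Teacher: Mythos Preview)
Your proof is correct and follows essentially the same approach as the paper's: for each center $s\in\InitSolSet_\ell$, use the ordinal profile $\succ_s$ to order the groups by distance to $s$ (via their nearest representatives), then binary-search for the threshold index against $\lambda$, spending $\log k$ queries per center. Your write-up is slightly more explicit about extracting the representative $g_i^s$ from each group, but the argument is otherwise identical.
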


\begin{restatable}{lemma}{evaluatepredicatequeries}
    \label{lemma:evaluate-predicate-queries}
    Given $\ell \in [k]$, it takes at most $\ell \log(k) + \ell$ distance queries to evaluate the predicate $\mathsf{P}(\ell)$.
\end{restatable}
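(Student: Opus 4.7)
The plan is to bound the query cost in two separate steps corresponding to the two summands, by exploiting the equivalent reformulation $\mathsf{P}(\ell) \equiv \bigl(\lambda_{\ell} \le \cost(\InitSolSet_{\ell})/4\bigr)$. This reformulation is crucial: rather than computing $\lambda_{\ell}$ (which we would only later show costs $O(k\log^2 k)$ queries), we only need to test the existence of a left-perfect matching on the single projection graph $H^{\ell}_{\lambda}$ with $\lambda = \cost(\InitSolSet_{\ell})/4$. Once $\cost(\InitSolSet_{\ell})$ is known and the graph is built, the matching check is purely combinatorial and requires no further distance queries.

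First I would handle the $\ell$-query term, namely the computation of $\cost(\InitSolSet_{\ell}) = \max_{u\in U} d(u,\InitSolSet_{\ell})$. Even though the maximum is over all $n$ points, the ordinal profile lets us avoid $n$ queries: for each $u \in U$ the nearest center in $\InitSolSet_{\ell}$ can be read off from $\succ_u$ without any query, partitioning $U$ into clusters $U_{s_1},\dots,U_{s_\ell}$, one per center $s_i\in \InitSolSet_{\ell}$. Within each cluster $U_{s_i}$, the farthest point from $s_i$ is exactly the element of $U_{s_i}$ that appears last in the ranking $\succ_{s_i}$ (by consistency of $\succ_{s_i}$ with $d$). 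So we identify this worst point with no queries and spend exactly one query per cluster to learn $d(s_i,u_{s_i})$. Taking the maximum over the $\ell$ queried values yields $\cost(\InitSolSet_{\ell})$ at a cost of exactly $\ell$ queries.

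Second, I would set $\lambda \leftarrow \cost(\InitSolSet_{\ell})/4$ and invoke Lemma~\ref{lemma:construct-graph-queries} to build the $(\ell,\lambda)$-projection graph $H^{\ell}_{\lambda}$ using at most $\ell \log k$ queries. Finally, run any standard bipartite matching algorithm on $H^{\ell}_{\lambda}$ (no queries); the predicate $\mathsf{P}(\ell)$ holds if and only if the matching saturates the left partition $\InitSolSet_{\ell}$, precisely because this is exactly the condition $\lambda_{\ell} \le \cost(\InitSolSet_{\ell})/4$. Summing the two contributions gives the claimed bound of $\ell\log k + \ell$ queries.

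The main conceptual obstacle is the $\ell$-query computation of $\cost(\InitSolSet_{\ell})$, since it seems a priori to require scanning all $n$ points and is the one place where the ordinal information must be leveraged rather than the cardinal oracle; the rest of the argument is essentially bookkeeping on top of Lemma~\ref{lemma:construct-graph-queries} and the equivalent formulation of $\mathsf{P}$.
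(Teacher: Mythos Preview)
Your proposal is correct and follows essentially the same approach as the paper: use the reformulation $\mathsf{P}(\ell)\equiv(\lambda_\ell\le \cost(\InitSolSet_\ell)/4)$, spend $\ell$ queries to compute $\cost(\InitSolSet_\ell)$, and invoke Lemma~\ref{lemma:construct-graph-queries} for the remaining $\ell\log k$ queries to build $H^{\ell}_{\lambda}$. Your explanation of the $\ell$-query computation of $\cost(\InitSolSet_\ell)$ via ordinal partitioning and per-center farthest-point identification is in fact more detailed than the paper's own proof, which merely asserts that this step takes $\ell$ queries.
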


\subsection{Computing $\lambda_\ell$}
\label{section:finding-minimum-lambda}
\begin{figure*}[htbp]
    \centering
    \begin{tabular}{p{0.45\textwidth}p{0.45\textwidth}}
        \includegraphics[width=0.45\textwidth]{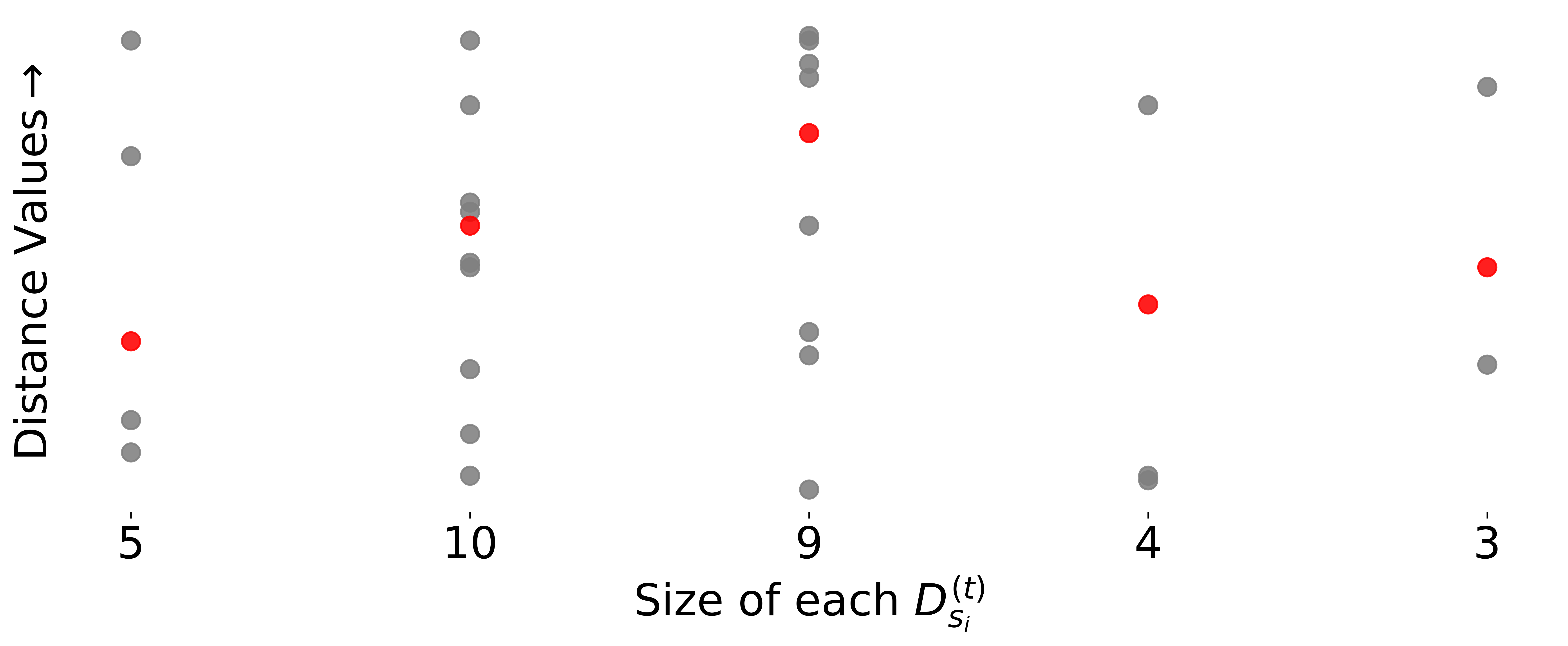} &
        \includegraphics[width=0.45\textwidth]{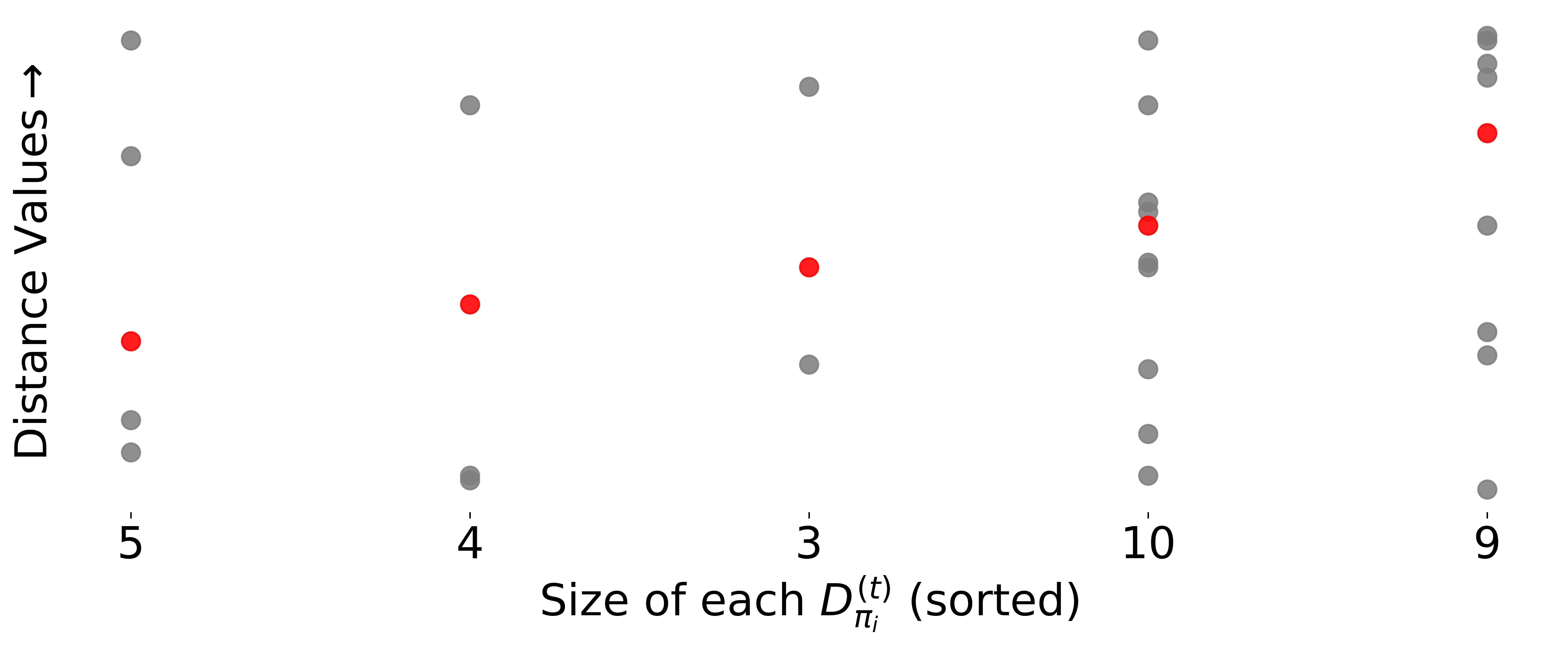} \\
        (a) Each column visualizes the distance values in $D_{s_i}^{(t)}$ for $i \in [\ell]$. The medians $\MED(D_{s_i}^{(t)})$ are marked in red. &
        (b) Sorting the $\{D_{s_i}^{(t)}\}_{i \in [\ell]}$ so that the medians $\MED(D_{s_i}^{(t)})$ are ordered increasingly from left to right. \\
        \includegraphics[width=0.45\textwidth]{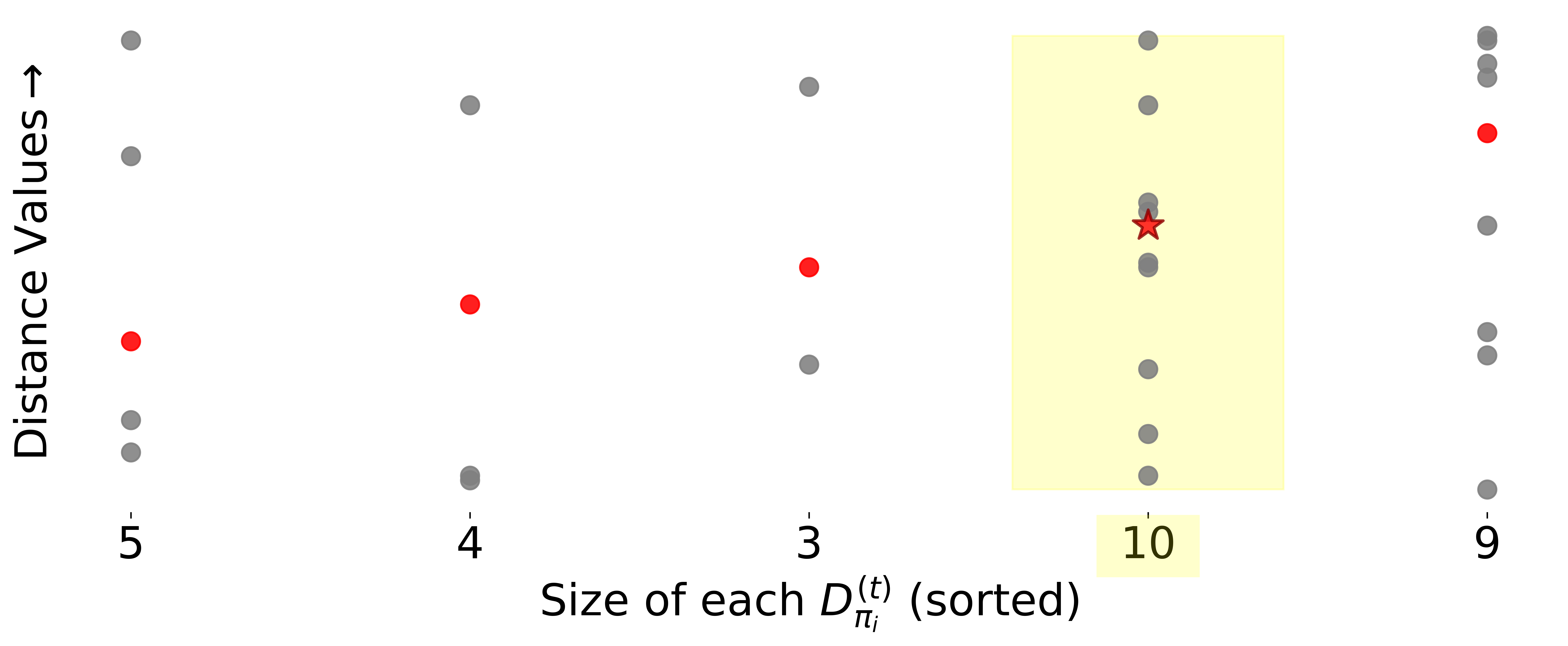} &
        \includegraphics[width=0.45\textwidth]{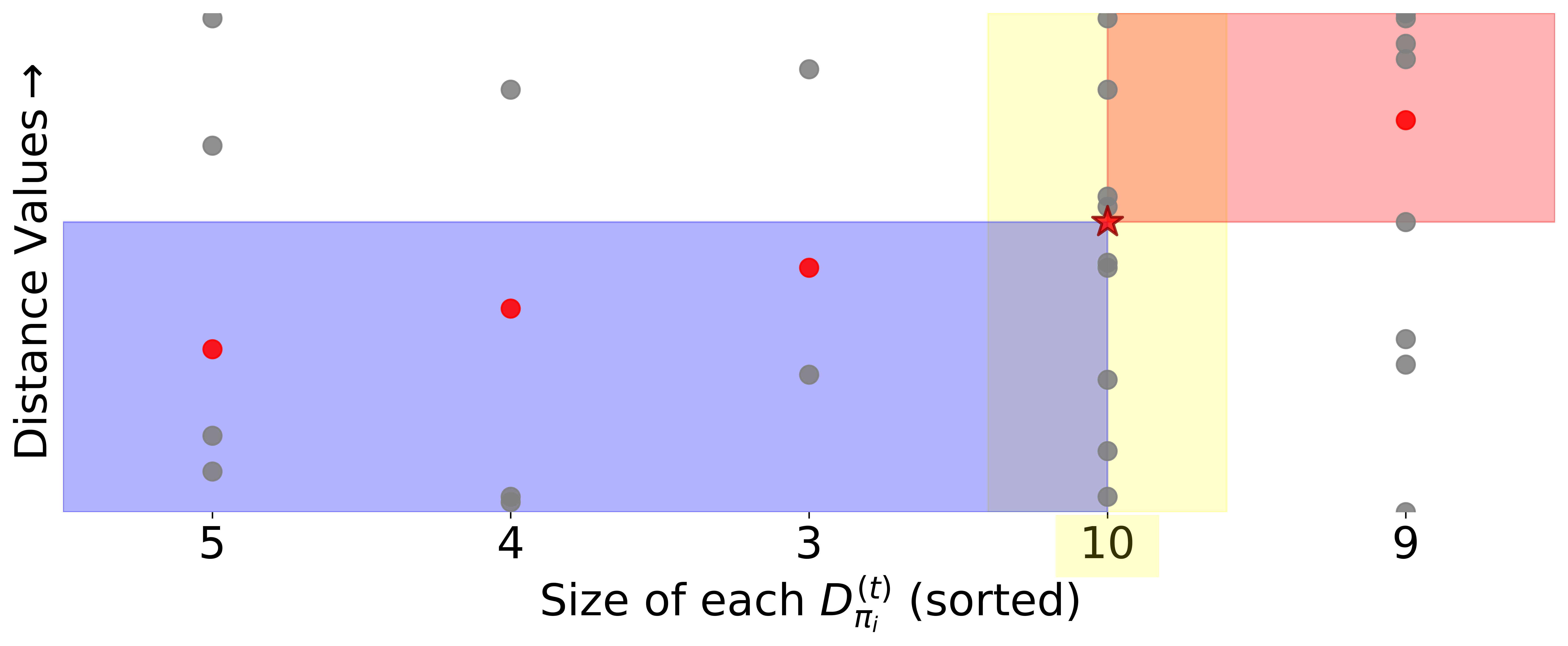} \\
        (c) The \emph{weighted} median corresponding to the sizes of $D_{\pi_i}^{(t)}$ is selected and marked in yellow. The pivot point $\tau$ is the median of the corresponding distances. &
        (d) We observe that least $1/4$ of distance values (in red) are $\geq$ the pivot point $\tau$, and at least $1/4$ (in blue) are $\leq$ the pivot point $\tau$. \\
    \end{tabular}
    \caption{An illustration of the median-of-medians approach.} 
    \label{fig:combined_examples}
\end{figure*}

In \Cref{alg:ordinal-fair-k-center-quality-ell}, the algorithm needs to compute $\lambda_{\ell}$ twice, respectively in line~\ref{algo:subph2st} and line~\ref{algo:finphst}; we need an efficient way to compute $\lambda_{\ell}$ with as few distance queries as possible. 
Towards this, we propose~\Cref{alg:finding-lambda-ell} (\findlambda), that itself is based on \emph{median of medians} approach of~\Cref{alg:median-median} (\mom).
Our main result is that it takes at most $\bigO(k \log^2(k))$ distance queries to find $\lambda_{\ell}$, and we state this result in \Cref{theorem:min-lambda-with-perfect-matching}. 
We first give an overview of the algorithm and then present the proof in \Cref{theorem:min-lambda-with-perfect-matching} at the end of the section.

\begin{theorem}
    \label{theorem:min-lambda-with-perfect-matching}
    For any fixed $\ell$, Algorithm~\ref{alg:finding-lambda-ell} takes $\bigO(k \log^2(k))$ distance queries to find $\lambda_\ell$.
\end{theorem}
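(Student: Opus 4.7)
The plan is to analyze Algorithm~\ref{alg:finding-lambda-ell} (\findlambda) together with its pivot-selection subroutine \mom\ (\Cref{alg:median-median}). The starting observation is that $\lambda_\ell$ is one of the $\ell k$ distances $\{d(s,G) : s \in \InitSolSet_\ell,\, G \in \MultiGroupSet\}$, which we organize as $\ell$ ``columns'' $D_{s_1},\dots,D_{s_\ell}$, one per center in $\InitSolSet_\ell$. Within each column $D_{s_i}$, the ranking of the $k$ groups by their distance to $s_i$ is fully determined by the ordinal profile $\succ_{s_i}$ (the $j$-th entry is the group containing the $j$-th element in $\succ_{s_i}$), so each column can be treated as a sorted list of known length whose numeric entries are obtained only via queries. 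The algorithm maintains a contiguous active sub-interval $D_{s_i}^{(t)} \subseteq D_{s_i}$ for every column, initially of length~$k$.

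At round $t$ the algorithm first queries the median of each $D_{s_i}^{(t)}$, costing $\ell$ queries in total. It then applies \mom\ to the sequence of column medians, with each column weighted by its current size $|D_{s_i}^{(t)}|$, and takes the \emph{weighted} median of these medians as the pivot $\tau$. To decide whether $\lambda_\ell \le \tau$, the algorithm tests whether the projection graph $H^\ell_\tau$ admits a left-perfect matching; by \Cref{lemma:construct-graph-queries}, constructing $H^\ell_\tau$ costs $\ell \log k$ queries (one binary search along each column), and the matching check itself is purely combinatorial. Hence each round uses at most $\ell + \ell\log k = O(\ell \log k)$ queries.

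The structural heart of the argument is the claim that each round discards at least a $1/4$ fraction of the total active candidates, and does so \emph{for free} in terms of queries. By the weighted-median choice of $\tau$, the columns split into a \emph{left family} with median $\le \tau$ and a \emph{right family} with median $\ge \tau$, where each family carries at least half of the total weight $M := \sum_i |D_{s_i}^{(t)}|$. If $H^\ell_\tau$ admits a left-perfect matching then $\lambda_\ell \le \tau$, and in every right column the elements strictly above the median lie strictly above $\tau$, so none of them can equal $\lambda_\ell$ and they are discarded; the matching-absent case is symmetric, using the elements strictly below the median of every left column (all of which are $< \lambda_\ell$). In either branch at least $\tfrac{1}{2}\cdot\tfrac{1}{2} = \tfrac{1}{4}$ of the active candidates are eliminated, and because the median position inside each column is known from step~(i), no additional queries are needed to execute the discard.

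Since $M$ starts at $\ell k \le k^2$ and shrinks by a factor $3/4$ each round, after $O(\log k)$ rounds only a single candidate remains, and it must equal $\lambda_\ell$. Combining the per-round cost $O(\ell \log k)$ with $O(\log k)$ rounds yields $O(\ell \log^2 k) = O(k \log^2 k)$ queries in total, which is the bound claimed. I expect the main obstacle to be formalizing the weighted-median guarantee (i.e., verifying cleanly that the left and right families each carry at least half of the total weight, including the degenerate case in which a single column already dominates) and handling ties among distance values, so that the ``discard above $\tau$'' or ``discard below $\tau$'' operation removes only entries provably different from $\lambda_\ell$ while still shedding the full $1/4$ fraction used in the potential argument.
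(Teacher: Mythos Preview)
Your proposal is correct and follows the paper's approach: organize the $\ell k$ candidate distances into $\ell$ ordinally sorted columns, take a weighted median-of-medians as pivot, establish the $1/4$ shrinkage per round (this is precisely \Cref{lemma:median-median}), and charge $O(\ell\log k)$ queries per round via \Cref{lemma:construct-graph-queries}, for $O(k\log^2 k)$ overall.

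One discrepancy worth flagging: you describe the discard as removing only the above-median entries of the right-family columns (or symmetrically), which is indeed query-free since median positions are known ordinally. But \Cref{alg:median-median} as written sets $D^{(t+1)}=\{d\in D^{(t)}:d<\tau\}$ (respectively $>\tau$) across \emph{all} columns, and the paper charges this step an additional $\ell\log k$ queries via \Cref{lemma:reduce-search-space}. Your more conservative discard is a subset of the algorithm's discard and already witnesses the $1/4$ reduction, so the asymptotic bound is unaffected either way; but since the theorem is about \Cref{alg:finding-lambda-ell} specifically, you should track its actual update rule and cite \Cref{lemma:reduce-search-space} for that cost. Relatedly, the paper's correctness argument (\Cref{theorem:correctness-of-finding-lambda-ell}) does not rely on ``the last surviving candidate equals $\lambda_\ell$'' but instead maintains a running best-feasible threshold $\tau^{(t)}$ and returns it when $D^{(t)}$ empties; this sidesteps the tie-handling worry you raise, since entries equal to $\tau$ are always removed while $\tau$ itself is recorded whenever the matching test succeeds.
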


A naive approach to find $\lambda_{\ell}$ is to query all the distance values between the initial centers $\InitSolSet$ and all the other points, and conduct the standard binary search~\cite{gadekar2025fair}, which takes $nk$ distance queries.  
In our paper, we reduce the search space $D = \{d(\SolMem_i, \MultiGroup_j) \mid i \in [\ell], j \in [k]\}$, note that entries of $D$ is not known and need to be queried; and we reduce the distance queries by utilizing the structure of $D$.
We give an illustration of the algorithm in \Cref{fig:combined_examples}, and we give a detailed explanation in the following.

\sbpara{Step (a)} 
Denote $\InitSolSet_{\ell} = \{\SolMem_1, \SolMem_2, \dots, \SolMem_{\ell}\}$. 
Let the search space of $D$ at $t$-th step be $D^{(t)}$, which is initialized as $D^{(0)} = D$.  
Denote $D_{\SolMem_i}^{(t)} = \{d(\SolMem_i, \MultiGroup_j) \in D^{(t)} \mid j \in [k]\}$ for each $i \in [\ell]$, which is the set of distance values between the center $\SolMem_i$ and the groups in $\MultiGroupSet$ that remain in the search space at the $t$-th step.
For any fixed $s_i$, the ordering of $D_{\SolMem_i}^{(t)} = \{d(\SolMem_i, \MultiGroup_j)\}_{j \in [k]}$ is known to the algorithm.
Denote the median of $D_{\SolMem_i}^{(t)}$ as $\MED(D_{\SolMem_i}^{(t)})$; we break tie arbitrarily.  
This only costs one distance query to obtain the median value. 

\sbpara{Step (b)}
The algorithm sorts $D_{\SolMem_i}^{(t)}$ according to the values of $\MED(D_{\SolMem_i}^{(t)})$ for any $i \in [\ell]$ in non-decreasing order, which does not cost additional distance queries.  
Let the sorted $\{D_{\SolMem_i}^{(t)}\}_{i \in [\ell]}$ be $\{D_{\pi_i}^{(t)}\}_{i \in [\ell]}$.

\sbpara{Step (c)}
Next, we find a \emph{weighted} median of the $\{\abs{D_{\pi_i}^{(t)}}\}_{i \in [\ell]}$ and denote its index as $\pidx$. 
Particularly, we need to find the $\pidx$ such that the following two conditions hold:
$\sum_{i=\pidx}^{k} \abs{D_{\pi_i}^{(t)}} \geq \frac{1}{2}\sum_{i=1}^k \abs{D_{\pi_i}^{{t}}} = \frac{1}{2} \abs{D^{(t)}}$ and $\sum_{i=1}^{\pidx} \abs{D_{\pi_i}^{(t)}} \geq \frac{1}{2}\sum_{i=1}^k \abs{D_{\pi_i}^{{t}}} = \frac{1}{2} \abs{D^{(t)}}$.
This $\pidx$ is picked according to \emph{line 3} of Algorithm~\ref{alg:median-median}.
We denote the pivot point (a weighted median-of-median) as $\tau = \MED(D_{\pi_{\pidx}}^{(t)})$. 

\sbpara{Step (d)}
The $\tau$ selected in Step (c) has a nice property that at least $1/4$ of the distance values are greater or equal to $\tau$, and at least $1/4$ of the distance values are smaller or equal to $\tau$. In this way, it helps us to reduce our search space by the factor at least $\frac{1}{4}$,
the concrete proof we present in \Cref{lemma:median-median}. 

\smallskip 

Concretely, we introduce the {binary search} based on \emph{median of medians} approach, which we present in Algorithm~\ref{alg:median-median}.
We denote $\mom(D^{(t)}, \tau^{(t)})$ as the \emph{median-of-median search} of the search space $D^{(t)}$ at the $t$-th iteration and $\tau^{(t)}$ as the \emph{current best feasible result}. 
Here, the \emph{median-of-median} is not the exact median; however, it is able to reduce the search space by a constant factor, which we show in step (c) and more formally in Lemma~\ref{lemma:median-median}.
The $\mom(D^{(t)}, \tau^{(t)})$ essentially does two things. First, it finds a pivot point $\tau$ as in line~\ref{algo:find-pivot}. Second, it constructs the bipartite graph $H^{\ell}_{\tau}$ and checks if it has a left-perfect matching: if yes, it means that $\lambda_{\ell} \leq \tau$, and $\tau$ is the current best feasible result; if no, it means that $\lambda_{\ell} > \tau$, and $\tau$ is not a feasible result, and the algorithm returns $\tau^{(t)}$ as the current best feasible result.

Next, the \Cref{lemma:reduce-search-space} shows the number of distance queries needed to construct the next search space $D^{(t+1)}$, respectively from line~\ref{algo:reduce-search-space-up} and line~\ref{algo:reduce-search-space-down} that correspond to $\lambda_{\ell} \leq \tau$ and $\lambda_{\ell} > \tau$.
Thus, the next search space $D^{(t+1)}$ consists of either all the distances that are \emph{strictly smaller than $\tau$} (since we already recorded $\tau$ as the current best feasible result) or all the distances that are \emph{strictly greater than $\tau$}. 
Using the same argument as \Cref{lemma:construct-graph-queries}, we can show for a fixed threshold $\tau$, it takes at most $\ell \log(k)$ distance queries to construct the next search space $D^{(t+1)}$, we state it in \Cref{lemma:reduce-search-space}. 
\begin{restatable}{lemma}{reducesearchspace}
    \label{lemma:reduce-search-space}
    For any fixed threshold $\tau$ and the current search space $D^{(t)}$, it takes at most $\ell \log(k)$ distance queries to find the distances that are strictly smaller than $\tau$ or the distances that are strictly greater than $\tau$.
\end{restatable}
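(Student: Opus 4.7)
The plan is to reuse the binary-search idea behind Lemma~\ref{lemma:construct-graph-queries}: because the ordinal profile already fixes, for each center $\SolMem_i \in \InitSolSet_{\ell}$, the sorted order of its distances to the $k$ groups, splitting $D_{\SolMem_i}^{(t)}$ at the threshold $\tau$ reduces to a binary search on a sorted list of length at most $k$, which costs $O(\log k)$ queries. Summing over the $\ell$ centers then gives the $\ell \log k$ bound.

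Concretely, fix $i \in [\ell]$. Using only the ranking $\succ_{\SolMem_i}$, the algorithm can, without any distance queries, identify for every group $\MultiGroup_j$ its closest point $g_j$ to $\SolMem_i$ (the earliest point of $\succ_{\SolMem_i}$ that lies in $\MultiGroup_j$), so $d(\SolMem_i,\MultiGroup_j)=d(\SolMem_i,g_j)$. Since $\succ_{\SolMem_i}$ is consistent with $d$, the order in which $g_1,\dots,g_k$ appear in $\succ_{\SolMem_i}$ yields a permutation $\pi_i$ with $d(\SolMem_i,\MultiGroup_{\pi_i(1)})\le \cdots\le d(\SolMem_i,\MultiGroup_{\pi_i(k)})$, and this permutation is known to the algorithm for free. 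To produce either the strictly-smaller or strictly-greater portion of $D_{\SolMem_i}^{(t)}$ relative to $\tau$, I run a standard binary search on the indices of this sorted list: at each step, query the single distance at the midpoint, compare with $\tau$, and recurse on the appropriate half. This locates the unique split point in at most $\lceil \log_2 k\rceil$ queries, after which the chosen side can be read off the sorted order with no further queries.

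The only point that really needs care, and which I view as the main obstacle, is justifying that $D_{\SolMem_i}^{(t)}$ is in fact a \emph{contiguous} sub-interval of the $\pi_i$-sorted list at every iteration $t$; otherwise the binary search would have to scan a scattered set and the $O(\log k)$ bound would fail. I plan to establish this by induction on $t$: initially $D_{\SolMem_i}^{(0)} = D_{\SolMem_i}$ is the entire sorted list; and each invocation of \mom{} (cf.\ the two branches described after Algorithm~\ref{alg:median-median}) replaces $D^{(t)}$ by either $\{d \in D^{(t)}: d < \tau\}$ or $\{d \in D^{(t)}: d > \tau\}$ for a single threshold $\tau$, which, restricted to the $\pi_i$-sorted coordinates of center $\SolMem_i$, deletes exactly a suffix or a prefix. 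Contiguity is therefore preserved, and the algorithm can simply store the two endpoints of $D_{\SolMem_i}^{(t)}$ in $\pi_i$-order across iterations. Combining the per-center bound with $i \in [\ell]$ yields the claimed $\ell \log(k)$ total distance queries.
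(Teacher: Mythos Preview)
Your proposal is correct and mirrors the paper's approach; the paper's proof is literally a one-line reference back to Lemma~\ref{lemma:construct-graph-queries}, i.e., the same per-center binary search on the ordinally known sorted list of the $k$ group distances. The contiguity invariant you flag as the main obstacle is true and nicely argued, but it is not actually needed for the $\ell\log k$ bound: you can always binary-search on the full $\pi_i$-sorted list of all $k$ groups, and once the split index is located, intersecting with the stored index set of $D_{s_i}^{(t)}$ requires no further distance queries whether or not that set is contiguous.
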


\begin{algorithm}[]
    \SetAlgoNoEnd
    \caption{\findlambda}
    \label{alg:finding-lambda-ell}
    \SetAlgoLined
    \KwIn{An instance $\Ical$ of \ordfairkcen, $\ell \in [k]$, and $\InitSolSet_{\ell}$}
    \KwOut{$\lambda_{\ell}$}
    \tcp{Finding $\lambda_{\ell}$ using ordinal information}
    
    Let the search space be defined as a set of distance values of all the possible pairs between the centers in $\InitSolSet_{\ell}$ and the groups in $\MultiGroupSet$, i.e., $D^{(0)} = \{d(\SolMem_i, \MultiGroup_j) \mid i \in [\ell], j \in [k]\}$

    Let $\tau^{0} = \max D $ \tcp{Maximum distance value}

    \For{$t = 0, 1, \dots, \log_{\frac{4}{3}}(k \ell)$}{
    $D^{(t+1)}, \tau^{(t+1)} = \mom(D^{(t)}, \tau^{(t)})$

    \If{$D^{(t+1)}$ is empty}{
        Return $\tau^{(t+1)}$
    }
    }

\end{algorithm}

\begin{algorithm}[h]
\SetAlgoNoEnd
    \caption{The Median of Medians $\mom(D^{(t)})$}
    \label{alg:median-median}
    \SetAlgoLined
    \KwIn{$D^{(t)}$, $\tau^{(t)}$}
    \KwOut{$D^{(t+1)}$, $\tau^{(t+1)}$}

    Sort $\{D_{s_i}^{(t)}\}_{i \in [\ell]}$ in non-decreasing order according to the distance value of $\MED(D_{s_i}^{(t)})$ across $i \in [\ell]$. Here, $\MED(D_{s_i}^{(t)})$ is the median of $D_{s_i}^{(t)}$

    Denote the sorted $\{D_{\SolMem_i}^{(t)}\}_{i \in [\ell]}$ as $\{D_{\pi_i}^{(t)}\}_{i \in [\ell]}$.
    Let the size of $D^{(t)}$ be $L^{(t)}$

    \tcp{Start with $j=1$, and increment $j$ until the condition is met}

    Let $\pidx$ be the smallest $j$ such that $\sum_{i=1}^{j} \abs{D_{\pi_{i}}^{(t)}} \geq \frac{L^{(t)}}{2}$

    Let $\tau = \MED(D_{\pi_{\pidx}}^{(t)})$ and construct the bipartite graph $H^{\ell}_{\tau}$ \label{algo:find-pivot}

    \uIf{$H^{\ell}_{\tau}$ has a left-perfect matching}{
        $\tau^{(t+1)} = \tau$ \tcp{$\tau$ is current best feasible result}
        
        $D^{(t+1)} = \{d \in D^{(t)} \mid d < \tau \}$  \label{algo:reduce-search-space-up}
    }
    \Else{
        $\tau^{(t+1)} = \tau^{(t)}$ \tcp{$\tau^{(t)}$ is best feasible result from the previous step}
        $D^{(t+1)} = \{d \in D^{(t)} \mid d > \tau \}$ \label{algo:reduce-search-space-down}
    }
    Return $D^{(t+1)}$, $\tau^{(t+1)}$
\end{algorithm}

We state our observation from step (d) formally here that at each iteration, the search space is reduced by a constant factor. 

\begin{restatable}{lemma}{medianmedian}
    \label{lemma:median-median}
    Each invocation of Algorithm~\ref{alg:median-median} reduces the search space by at least $\frac{1}{4}$, namely, $\abs{D^{(t+1)}} \leq \frac{3}{4} \abs{D^{(t)}}$.
\end{restatable}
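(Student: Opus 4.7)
The plan is to reduce the lemma to a single pivot-quality statement: if we can show that both $|\{d \in D^{(t)} : d \leq \tau\}| \geq L^{(t)}/4$ and $|\{d \in D^{(t)} : d \geq \tau\}| \geq L^{(t)}/4$, where $\tau = \MED(D_{\pi_\pidx}^{(t)})$ is the pivot chosen inside \mom, then the conclusion is immediate: whichever branch of lines~\ref{algo:reduce-search-space-up}--\ref{algo:reduce-search-space-down} is taken, $D^{(t+1)}$ discards every distance equal to $\tau$ together with an entire one-sided block, and hence drops at least $L^{(t)}/4$ elements from $D^{(t)}$.

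To establish these two bounds, I would first exploit the sorting step of \mom: after arranging the sets so that $\MED(D_{\pi_1}^{(t)}) \leq \cdots \leq \MED(D_{\pi_\ell}^{(t)})$, every $i \leq \pidx$ satisfies $\MED(D_{\pi_i}^{(t)}) \leq \tau$, so at least $\lceil |D_{\pi_i}^{(t)}|/2 \rceil$ entries of $D_{\pi_i}^{(t)}$ are $\leq \tau$ by the defining property of a median. Summing over $i = 1,\dots,\pidx$ yields at least $\tfrac{1}{2}\sum_{i=1}^{\pidx} |D_{\pi_i}^{(t)}|$ elements of $D^{(t)}$ that are $\leq \tau$, and the defining property $\sum_{i=1}^{\pidx} |D_{\pi_i}^{(t)}| \geq L^{(t)}/2$ of the chosen index $\pidx$ turns this into the desired $L^{(t)}/4$. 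The symmetric argument for the ``$\geq \tau$'' side uses that $\pidx$ was chosen to be the \emph{smallest} index satisfying the halfsum condition, so $\sum_{i=1}^{\pidx-1} |D_{\pi_i}^{(t)}| < L^{(t)}/2$ and therefore $\sum_{i=\pidx}^{\ell} |D_{\pi_i}^{(t)}| > L^{(t)}/2$; combined with $\MED(D_{\pi_i}^{(t)}) \geq \tau$ for all $i \geq \pidx$, this produces more than $L^{(t)}/4$ elements that are $\geq \tau$.

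The main subtlety I expect to spell out carefully is how the weak inequalities ``$\leq \tau$'' and ``$\geq \tau$'' used in the counting interact with the strict inequalities ``$< \tau$'' and ``$> \tau$'' used in the update rules, and how ties are handled across different $D_{\pi_i}^{(t)}$. The point to emphasize is that the pivot $\tau$ itself, and any other distance equal to $\tau$, is removed by \emph{either} branch of the case analysis, so those elements strictly contribute to the reduction on both sides; consequently the $L^{(t)}/4$ lower bound is still attained even after subtracting the contribution at $\tau$ itself, and any fixed tie-breaking rule in the sorting step of \mom or in the choice of a median when $|D_{\pi_i}^{(t)}|$ is even is sufficient for the argument to carry through.
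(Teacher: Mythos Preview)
Your proposal is correct and follows essentially the same weighted median-of-medians argument as the paper: both show that the sets $\{D_{\pi_i}^{(t)}\}_{i\le \pidx}$ and $\{D_{\pi_i}^{(t)}\}_{i\ge \pidx}$ each have total size at least $L^{(t)}/2$, and that within each block at least half the entries lie on the appropriate side of $\tau$, yielding the $L^{(t)}/4$ reduction in either branch. Your treatment of the strict-versus-weak inequality and tie-breaking issues is in fact more explicit than the paper's own proof.
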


As a direct implication of Lemma~\ref{lemma:median-median}, we get,
\begin{corollary}
    It takes at most $\log_{\frac{4}{3}}(k \ell)$ iterations to find the minimum value of $\lambda_{\ell}$.
\end{corollary}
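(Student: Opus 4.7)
The plan is a direct induction using Lemma~\ref{lemma:median-median}. First, I observe that the initial search space $D^{(0)} = \{d(s_i, G_j) : i \in [\ell],\, j \in [k]\}$ has cardinality at most $k\ell$. Lemma~\ref{lemma:median-median} then gives the key contraction $|D^{(t+1)}| \le \tfrac{3}{4}\,|D^{(t)}|$ per invocation of \mom, so by induction $|D^{(t)}| \le \left(\tfrac{3}{4}\right)^{t} k\ell$. Setting $t = \log_{4/3}(k\ell)$ yields $|D^{(t)}| \le 1$, and one further round empties the search space, triggering the early-return branch in \findlambda\ which outputs the current best feasible $\tau^{(t+1)}$. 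This gives the claimed iteration bound of $\log_{4/3}(k\ell)$.

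To complete the argument, I would verify that the value returned at termination is indeed $\lambda_\ell$. By design of \mom, the value $\tau^{(t)}$ is maintained as the smallest pivot ever tested whose $(\ell,\tau)$-projection graph $H^{\ell}_\tau$ admitted a left-perfect matching, while $D^{(t)}$ always retains exactly those elements of $D^{(0)}$ not yet certified as infeasible (strictly above the current best feasible pivot) or dominated (strictly below an infeasible pivot). When $D^{(t+1)} = \emptyset$, no element of $D^{(0)}$ remains as a candidate for a strictly smaller feasible threshold, so $\tau^{(t+1)} = \min\{d \in D^{(0)} : H^{\ell}_d \text{ admits a left-perfect matching}\} = \lambda_\ell$.

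The main (minor) obstacle is the bookkeeping of whether the pivot $\tau$ itself is removed from the next search space. In~\Cref{alg:median-median} the updates $D^{(t+1)} = \{d \in D^{(t)} : d < \tau\}$ and $D^{(t+1)} = \{d \in D^{(t)} : d > \tau\}$ use strict inequalities, so $\tau$ is always dropped; this is essential both to match the invariant above and to guarantee that the $|D^{(t)}|=1$ case leads to $|D^{(t+1)}|=0$ in one further step, closing the geometric-decay argument cleanly. Beyond this, the proof is a one-line application of Lemma~\ref{lemma:median-median}.
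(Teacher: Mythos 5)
Your argument is correct and matches what the paper implicitly intends: the corollary is stated as a direct consequence of Lemma~\ref{lemma:median-median}, and your geometric-decay telescoping from $|D^{(0)}|\le k\ell$ through $|D^{(t)}|\le\left(\tfrac{3}{4}\right)^{t}k\ell$ to an empty search space is exactly the standard way to unpack that implication. The additional correctness discussion you append is handled separately in the paper by Theorem~\ref{theorem:correctness-of-finding-lambda-ell}, so it is not needed for this corollary, but it is sound (and correctly relies on the fact that $\lambda_\ell$ is always one of the distances in $D^{(0)}$, since membership of edges in $H^{\ell}_{\lambda}$ changes only at those thresholds and the existence of a left-perfect matching is monotone in $\lambda$).
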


Now, we are ready to give a correctness proof for Algorithm~\ref{alg:finding-lambda-ell} that it indeed outputs $\lambda_{\ell}$, we state it in \Cref{theorem:correctness-of-finding-lambda-ell}. 
Our proof is by contradiction, and is an extension of the standard binary search proof. 

\begin{theorem}
    \label{theorem:correctness-of-finding-lambda-ell}
    Given $\ell \in [k]$, Algorithm~\ref{alg:finding-lambda-ell} outputs $\lambda_{\ell}$. 
\end{theorem}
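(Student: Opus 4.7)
The plan is to establish correctness via an invariant on the search state $(D^{(t)}, \tau^{(t)})$ and conclude from the loop's termination. Specifically, I would prove by induction on $t$ that \textbf{(I1)} $\tau^{(t)}$ is \emph{feasible}, meaning $H^{\ell}_{\tau^{(t)}}$ admits a left-perfect matching; and \textbf{(I2)} $\lambda_{\ell}$ equals the minimum feasible value in $D^{(t)} \cup \{\tau^{(t)}\}$. The theorem follows immediately: when the loop exits with $D^{(t+1)} = \emptyset$, (I2) forces $\lambda_{\ell} = \tau^{(t+1)}$, which is the returned value, and \Cref{lemma:median-median} guarantees this empty state is reached within $\log_{4/3}(k\ell)$ iterations.

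For the base case $t=0$, the set $D^{(0)}$ enumerates every pairwise distance between $\InitSolSet_{\ell}$ and $\MultiGroupSet$; since the edge set of $H^{\ell}_{\lambda}$ changes only as $\lambda$ crosses such distance values, $\lambda_{\ell} \in D^{(0)}$. Taking $\tau^{(0)} = \max D^{(0)}$ makes $H^{\ell}_{\tau^{(0)}}$ the complete bipartite graph between $\InitSolSet_{\ell}$ and $\MultiGroupSet$, which admits a left-perfect matching because the reduction of \Cref{sec:problem-formulation} ensures $\ell \le k = |\MultiGroupSet|$. Hence (I1) holds at $t=0$, and (I2) follows since $D^{(0)} \cup \{\tau^{(0)}\} = D^{(0)}$ and $\lambda_\ell$ is by definition the minimum feasible value in $D^{(0)}$.

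The inductive step hinges on the \emph{monotonicity of feasibility}: whenever $\lambda \le \lambda'$, the edge set of $H^{\ell}_{\lambda}$ is contained in that of $H^{\ell}_{\lambda'}$, so any left-perfect matching at the smaller threshold persists at the larger one. Splitting on whether the pivot $\tau$ returned by \mom\ is feasible: in the feasible branch we have $\lambda_{\ell} \le \tau$, and setting $\tau^{(t+1)} = \tau$ with $D^{(t+1)} = \{d \in D^{(t)} : d < \tau\}$ gives $D^{(t+1)} \cup \{\tau^{(t+1)}\} = \{d \in D^{(t)} : d \le \tau\}$, which still contains $\lambda_{\ell}$ as its minimum feasible element by the inductive hypothesis; in the infeasible branch, monotonicity forces $\lambda_{\ell} > \tau$, so $\lambda_{\ell}$ survives in $D^{(t+1)} = \{d \in D^{(t)} : d > \tau\}$ while $\tau^{(t+1)} = \tau^{(t)}$ inherits (I1) and (I2) trivially.

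The delicate point, and the step I expect to be the crux, is the asymmetric update in the feasible branch: the pivot $\tau$ is removed from the distance set but \emph{retained} as $\tau^{(t+1)}$. Without this bookkeeping, $\lambda_{\ell}$ could be dropped from the candidate set precisely when $\lambda_{\ell} = \tau$, breaking (I2). Once this is verified, the remainder is a textbook binary-search correctness argument, with the median-of-medians construction only affecting the shrinkage rate of $D^{(t)}$ rather than the invariant itself.
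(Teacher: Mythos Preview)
Your invariant-based route is sound and arguably tidier than the paper's proof, which argues by contradiction on whether the output lies above or below $\lambda_\ell$ and then case-splits on how $\lambda_\ell$ could have been discarded. Both arguments ultimately hinge on the same monotonicity-of-feasibility observation, so the difference is one of packaging: your invariant makes the role of $\tau^{(t)}$ as ``best feasible value seen so far'' explicit, whereas the paper leaves this implicit in its case analysis.

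That said, the pair (I1)+(I2) as you state it is not quite an inductive invariant, and the failure is in the feasible branch---though not at the point you flagged. The case $\lambda_\ell = \tau$ is indeed handled correctly by storing $\tau$ as $\tau^{(t+1)}$. The actual gap is when $\lambda_\ell = \tau^{(t)}$ (the \emph{old} record) with $\lambda_\ell \notin D^{(t)}$, and the new pivot $\tau \in D^{(t)}$ satisfies $\tau > \lambda_\ell$: then $\tau^{(t+1)} = \tau$ overwrites $\lambda_\ell$, and $\lambda_\ell \notin D^{(t+1)} \cup \{\tau^{(t+1)}\}$. Your identity $D^{(t+1)} \cup \{\tau^{(t+1)}\} = \{d \in D^{(t)} : d \le \tau\}$ silently drops $\tau^{(t)}$ from the right-hand side, so (I2) by itself cannot exclude this scenario.

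The fix is a one-line strengthening: add the invariant $\max D^{(t)} \le \tau^{(t)}$, which holds at $t=0$ since $\tau^{(0)} = \max D^{(0)}$, and is preserved by both branches (in the feasible branch $D^{(t+1)} \subseteq \{d < \tau\} = \{d < \tau^{(t+1)}\}$; in the infeasible branch $D^{(t+1)} \subseteq D^{(t)}$ and $\tau^{(t+1)} = \tau^{(t)}$). It forces every pivot to satisfy $\tau \le \tau^{(t)}$, so the bad case $\tau > \tau^{(t)} = \lambda_\ell$ never arises---equivalently, $\tau^{(t)}$ is non-increasing. The paper's contradiction argument uses precisely this fact, phrased as ``once $\tau$ is set to $\lambda_\ell$ it stays there for all subsequent steps,'' so with this addition the two proofs are interchangeable.
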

\begin{proof}
    We prove by contradiction. 
    Let's make case distinctions. 
    Suppose the algorithm outputs $\lambda' > \lambda_{\ell}$; it indicates at one step, $\lambda_{\ell}$ is removed from the search space but $\lambda'$ is not. 
    Namely algorithm~\ref{alg:median-median} set a threshold $\tau$ such that $\tau \geq \lambda_{\ell}$, and $H^{\ell}_{\tau}$ does not have a left-perfect matching. 
    This is already a contradiction with the definition of $\lambda_{\ell}$. 

    Next, let us suppose the algorithm outputs $\lambda' < \lambda_{\ell}$, again it indicates that at some step $\lambda_{\ell}$ is removed from the search space but $\lambda'$ is not. 
    There are only two scenarios, the first is that at one step $t$, the algorithm sets $\tau = \lambda_{\ell}$ and finds the left-perfect matching. In this scenario, it holds that $\tau^{t'} = \lambda_{\ell}$ for any following step $t'$ after $t$. Contrary to that the algorithm outputs $\lambda'$. 
    The second is that the algorithm sets $\tau < \lambda_{\ell}$ and finds the left-perfect matching, which is not possible by the definition of $\lambda_{\ell}$. 

    Thus we conclude Algorithm~\ref{alg:finding-lambda-ell} outputs $\lambda_{\ell}$. 
\end{proof}

We are now ready to prove Theorem~\ref{theorem:min-lambda-with-perfect-matching}.

\begin{proof}[Proof of Theorem~\ref{theorem:min-lambda-with-perfect-matching}]
    The correctness of the algorithm is proved by Theorem~\ref{theorem:correctness-of-finding-lambda-ell}.
    We focus on the number of distance queries by Algorithm~\ref{alg:finding-lambda-ell}.

    At each iteration, the algorithm conducts the median median search on the search space $D^{(t)}$, reduces the size of the search space to its $\frac{3}{4}$. 
    It takes $k$ to find the maximum distance value in $D$.
    It takes $\ell$ distance queries to sort $\{D_{s_i}^{(t)}\}_{i \in [\ell]}$ in non-decreasing order according to the distance value of $\MED(D_{s_i}^{(t)})$ across $i \in [\ell]$.
    It also takes $2 \ell \log(k)$ distance queries to construct the bipartite graph and to obtain $D^{(t+1)}$. 
    The procedure repeats at most $\log_{\frac{4}{3}} (k \ell)$ times, thus in total $\log_{\frac{4}{3}}(k \ell) (2 \ell \log(k) + \ell) + k \in \bigO(\ell \log^2 k)$ distance queries to find $\lambda_{\ell}$.
\end{proof}
\subsubsection*{Putting everything together we present a $5$-distortion algorithm using $O(k \log^2 k)$ queries.}
\label{sec:putting-everything-together}

\begin{theorem}
    \label{theorem:putting-everything-together} 
    Algorithm~\ref{alg:ordinal-fair-k-center-quality-ell} finds a $5$-distortion solution to the \emph{ordinal fair $k$-center problem} using at most $\bigO(k \log^2 k)$ distance queries.
\end{theorem}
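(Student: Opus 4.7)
The plan is to assemble the main theorem from the component results that have already been established in the excerpt: correctness will follow from Theorem~\ref{lemma:binary-search-on-ell} together with Theorem~\ref{theorem:get-feasible-solution}, while the query bound will follow by summing up the costs of the three phases using Lemma~\ref{lemma:evaluate-predicate-queries} and Theorem~\ref{theorem:min-lambda-with-perfect-matching}.

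For correctness, I would first argue that Theorem~\ref{lemma:binary-search-on-ell} guarantees that the index $\hat{\ell}$ chosen at the end of the main phase satisfies $\lambda_{\hat{\ell}} \leq \cost(S^{*})$ and $\cost(T_{\hat{\ell}}) \leq 4\,\cost(S^{*})$. Then, once \findlambda\ returns $\lambda_{\hat{\ell}}$ in line~\ref{algo:finphst}, we have a left-perfect matching on $H^{\hat{\ell}}_{\lambda_{\hat{\ell}}}$, and Theorem~\ref{theorem:get-feasible-solution} shows that the induced feasible solution $S$ satisfies $\cost(S) \leq \cost(T_{\hat{\ell}}) + \lambda_{\hat{\ell}} \leq 4\,\cost(S^{*}) + \cost(S^{*}) = 5\,\cost(S^{*})$, yielding the claimed distortion.

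For the query complexity, I would split the accounting by phase. The initial phase invokes \bcfrss\ and spends $2k$ queries by Theorem~\ref{theorem:ordinal-k-center-algorithm-smaller-queries}. The main phase performs at most $\lceil \log(k+1)\rceil + 2$ predicate evaluations by Theorem~\ref{lemma:binary-search-on-ell}, and each evaluation $\mathsf{P}(\ell)$ costs at most $\ell \log k + \ell \leq k \log k + k$ queries by Lemma~\ref{lemma:evaluate-predicate-queries}; summed over the binary search this gives $O(k \log^{2} k)$. It also performs one call to \findlambda\ to compute $\lambda_{L+1}$, costing $O(k \log^{2} k)$ by Theorem~\ref{theorem:min-lambda-with-perfect-matching}. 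The final phase makes one additional call to \findlambda\ for $\lambda_{\hat{\ell}}$, again $O(k \log^{2} k)$; the extraction of $S$ from the matching in line~\ref{algo:finalphend} does not incur any new queries, because the bipartite graph $H^{\hat{\ell}}_{\lambda_{\hat{\ell}}}$ and its left-perfect matching have already been constructed in the course of running \findlambda. Adding the three contributions gives the overall bound $O(k) + O(k \log^{2} k) + O(k \log^{2} k) = O(k \log^{2} k)$.

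The proof is essentially bookkeeping once all the components are in place, so the ``main obstacle'' in the proposal is not a technical difficulty but rather the careful verification that (i) no subroutine is invoked more often than the analysis assumes — in particular that \findlambda\ is called only \emph{twice}, once in the main phase (only when both corner cases fail) and once in the final phase — and (ii) that the solution-construction step of Theorem~\ref{theorem:get-feasible-solution} genuinely reuses the matching already computed by \findlambda\ rather than rebuilding the projection graph. Once these two accounting points are established, combining the correctness and query complexity arguments directly yields Theorem~\ref{theorem:putting-everything-together}.
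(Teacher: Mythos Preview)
Your proposal is correct and follows essentially the same approach as the paper: you invoke Theorem~\ref{lemma:binary-search-on-ell} and Theorem~\ref{theorem:get-feasible-solution} for correctness, and then sum the query costs of the three phases using Theorem~\ref{theorem:ordinal-k-center-algorithm-smaller-queries}, Lemma~\ref{lemma:evaluate-predicate-queries}, and Theorem~\ref{theorem:min-lambda-with-perfect-matching}, noting that \findlambda\ is called only twice and that the final matching step incurs no new queries. The paper's own proof is slightly terser on correctness (having established it in the preceding phases) but the accounting is identical.
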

\begin{proof}
    First, in the initial phase, finding an initial solution $\InitSolSet$ takes $2k$ distance queries by \Cref{theorem:ordinal-k-center-algorithm-smaller-queries}.
    Next, in the main phase, the predicate $P$ is called at most $\lceil \log(k + 1) \rceil + 2$ times  Theorem~\ref{lemma:binary-search-on-ell}. 
    To evaluate the predicate, it takes at most $\ell \log(k) + \ell$ distance queries by Lemma~\ref{lemma:evaluate-predicate-queries}. 
    Thus evaluating the predicate takes at most $(\lceil \log(k+1) \rceil + 2) (k \log(k) + k) \in \bigO(k \log^2 k)$ distance queries. 
    In the same main phase, the algorithm needs to compute a concrete $\lambda_{L+1}$ at most once, which takes at most $\log_{\frac{4}{3}}(k (L+1)) (2 (L+1) \log(k) + (L+1)) + k \in \bigO(k \log^2 k)$ distance queries by Theorem~\ref{theorem:min-lambda-with-perfect-matching}.
    In the final phase, again a $\lambda_{\hat{\ell}}$ is computed,  which costs $\bigO(k \log^2 k)$ distance queries. 
    Notice that obtaining the final solution $\SolSet$ does not induce additional queries, as the queries needed to compute $\SolSet$ have already been made when computing $\lambda_{\hat{\ell}}$. 
    Putting everything together, the algorithm takes $\bigO(k \log^2 k)$ distance queries. 
\end{proof}

\section{Conclusions}

In this work, we studied the ordinal k-committee selection problem under limited cardinal information and fairness constraints, where a minimum number of representatives must be chosen from each demographic group under an egalitarian (min–max) social cost objective. Modeling this as the ordinal fair k-center problem, we developed two query-efficient algorithms: a $5$-distortion algorithm using $\bigO(k \log^2 k)$ queries and a $3$-distortion algorithm using $2k^2$ queries.
To our knowledge, this is the first work to incorporate fairness constraints into the ordinal $k$-center framework. Our results highlight new challenges introduced by fairness in ordinal settings and open promising directions for developing efficient algorithms in fair and query-efficient social choice models.

\xhdr{Acknowledgments} 
Gionis and Tu are supported by 
the ERC Advanced Grant REBOUND (834862),
the Swedish Research Council project ExCLUS (2024-05603),
and the Wallenberg AI, Autonomous Systems and Software Program (WASP) funded by the Knut and Alice Wallenberg Foundation.
Thejaswi acknowledges support from the European Research Council (ERC) under the European Union's Horizon $2020$ research and innovation programme (Grant No. $945719$).

\clearpage

\bibliographystyle{ACM-Reference-Format}
\bibliography{ref}

\clearpage

\appendix

\section{Omitted Definitions  }
\begin{definition}[Left-perfect matching]
    \label{definition:left-perfect-matching}
    A matching $M$ on a bipartite graph is \emph{left-perfect} if for every vertex $v$ in the left set of the bipartite graph, there is an edge in $M$ incident to $v$.
\end{definition}

\section{Omitted proofs from Section~\ref{sec:conventional-algorithms}}

\naiveapproach*
\begin{proof}
The distortion of the algorithm follows directly from Corollary~\ref{corollary:lambda-ell-star-lower-bound}. The number of distance queries consists of three parts: First, computing the initial solution $\InitSolSet$ takes $\frac{k^2 - k}{2}$ distance queries by Theorem~\ref{theorem:ordinal-k-center-algorithm}. Second, querying all distances between the centers in $\InitSolSet$ and the groups in $\MultiGroupSet$ takes $k^2$ distance queries. Third, for any fixed $\ell$, the algorithm makes $\ell$ distance queries to compute $\cost(\InitSolSet_{\ell})$, leading to a total of $\frac{k(k+1)}{2}$ queries across all $\ell$. Summing these, the algorithm takes $2 k^2 \in \bigO(k^2)$ distance queries.
\end{proof}

\section{Omitted proofs from Section~\ref{sec:analysis-ordinal-fair-k-center-with-limited-distance-queries}}

\lambdaellstarlowerboundsmallerqueries* 
\begin{proof}
    Our proof utilizes Lemma 3.6 from Burkhardt et al.~\cite{burkhardt2024low}.
    Let us first formally restate this lemma using the notation of our paper. 
    \begin{lemma}[Lemma 3.6~\cite{burkhardt2024low}]
        \label{lemma:init-sol-mem-lower-bound-burkhardt}
        Let $\{\InitSolMem\} = \InitSolSet_{i+1} \setminus \InitSolSet_{i}$ be the center selected at the $i+1$-th step. 
        It holds that $d(\InitSolMem, \InitSolSet_{i}) \geq \frac{1}{2} \max_{u \in U} d(u, \InitSolSet_{i})$. 
    \end{lemma}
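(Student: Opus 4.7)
The plan is to decode the selection rule of \bcfrss\ into an ordinal-consistency statement for one carefully chosen pivot center, then close via a reverse triangle inequality and a packing argument. Let $D^\star := \max_{u \in U} d(u,\InitSolSet_i)$ and fix a witness $u^\star$ achieving it; by definition $d(u^\star, t) \geq D^\star$ for every $t \in \InitSolSet_i$. I would proceed by induction on $i$, the base case being vacuous since $\InitSolSet_1$ is a single arbitrary point.

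The key structural step is to identify a pivot $t_j \in \InitSolSet_i$ such that the algorithm's choice of $\InitSolMem$ is guaranteed, by the selection rule and the consistency of $\succ_{t_j}$ with $d$, to dominate $u^\star$: i.e.\ $d(\InitSolMem, t_j) \geq d(u^\star, t_j) \geq D^\star$. Because \bcfrss\ uses only $2k$ queries across all iterations, it cannot query the farthest candidate per center at every step; instead it maintains amortized pointers, and $\InitSolMem$ is the top-ranked eligible element of some $\succ_{t_j}$, with $d(\InitSolMem,t_j)$ the scalar certificate it retains. Since $u^\star \notin \InitSolSet_i$, it is an eligible candidate in $\succ_{t_j}$, so ordinal consistency delivers $d(\InitSolMem, t_j) \geq D^\star$. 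With this pivot in hand, for any other $t_m \in \InitSolSet_i$ I split on $d(t_j,t_m)$: if $d(t_j,t_m) \leq D^\star/2$, the reverse triangle inequality gives $d(\InitSolMem, t_m) \geq d(\InitSolMem, t_j) - d(t_j,t_m) \geq D^\star - D^\star/2 = D^\star/2$; if $d(t_j,t_m) > D^\star/2$, then the inductive hypothesis applied at the iteration when the later of $t_j, t_m$ was added forces $d(t_j, t_m) \geq D^\star/2$, and in fact $\InitSolMem$ cannot sit inside a ball of radius $< D^\star/2$ around $t_m$ while simultaneously satisfying $d(\InitSolMem,t_j) \geq D^\star$ and $d(t_j,t_m) > D^\star/2$ without violating the triangle inequality in the reverse direction. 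Taking the minimum over $t_m$ gives $d(\InitSolMem,\InitSolSet_i) \geq D^\star/2$.

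\textbf{Main obstacle.} The heart of the proof is not the factor $1/2$ — which is a routine triangle-inequality/packing consequence — but rather pinning down the correct pivot $t_j$ in the first step. Because \bcfrss's $2k$-query schedule does not freshly recompute each center's ``current farthest'' pointer every round, a bookkeeping argument is needed: I must verify that at the moment of selection the algorithm's certified value $d(\InitSolMem, t_j)$ still ordinally dominates $d(u^\star, t_j)$ despite possibly stale pointers. This reduces to showing that the update schedule in \bcfrss\ catches any new ordinal witness for some center's maximum on the very next invocation — precisely the invariant its $2k$-query design is built to preserve — after which the metric argument above closes the lemma.
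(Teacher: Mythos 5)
The paper itself does not prove this statement; it invokes it verbatim as Lemma~3.6 of Burkhardt et al.\ \cite{burkhardt2024low} inside the proof of Lemma~\ref{lemma:lambda-ell-star-lower-bound-smaller-queries}, so there is no in-paper proof to measure your attempt against.

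Taken on its own terms, your argument has a genuine gap at the final metric step. You reduce the lemma to exhibiting one pivot $t_j \in \InitSolSet_i$ with $d(\InitSolMem, t_j) \ge D^\star$, where $D^\star = \max_{u\in U} d(u,\InitSolSet_i)$, and then claim the subcase $d(t_j,t_m) > D^\star/2$ is ruled out because $\InitSolMem$ cannot lie within distance $D^\star/2$ of $t_m$ while satisfying $d(\InitSolMem,t_j)\ge D^\star$. This is false: taking $d(\InitSolMem, t_m) = \varepsilon$, $d(t_j,t_m) = D^\star$, and $d(\InitSolMem, t_j) = D^\star + \varepsilon$ satisfies every triangle inequality among the three points, respects $d(\InitSolMem,t_j)\ge D^\star$ and $d(t_j,t_m)>D^\star/2$, yet makes $d(\InitSolMem, \InitSolSet_i) \le \varepsilon$ arbitrarily small. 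A certified distance to one center simply does not propagate to the rest of $\InitSolSet_i$ via triangle inequalities. The conclusion $d(\InitSolMem, \InitSolSet_i) \ge D^\star/2$ genuinely requires the algorithm to maintain and use a certified lower bound on the distance from its chosen candidate to \emph{every} center in $\InitSolSet_i$ simultaneously; that is precisely the bookkeeping invariant you defer to your ``main obstacle'' paragraph without establishing, and you also acknowledge not having reconstructed the $2k$-query schedule of \bcfrss. So both the pivot-selection step and the geometric closure remain open, and even granting the pivot, the reduction you propose is insufficient to conclude.
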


We start with proving the $\InitSolSet$ is a progressive $4$-cover for $\Ical$.
Let $\OptSolSet$ be the optimal solution for $\Ical$.
Let $\ell \in [k]$ be the largest index such that $\InitSolSet_{\ell}$ hits each part of $\mathbf{\Pi}^*$ at most once.
If $\ell = k$, then by triangle inequality, for any $u \in U$, and suppose its closest center in $\InitSolSet$ is $\InitSolMem$, and $\InitSolMem$ is optimal cluster with center $\SolMem^*$. It holds that $d(u, \InitSolSet) = d(u, \InitSolMem) \leq d(u, \SolMem^*) + d(\SolMem^*, \InitSolMem) \leq 2\cdot \cost(\OptSolSet)$.
If $\ell < k$, it indicates that $\InitSolMem_{\ell+1}$ is the first center that hits a optimal cluster 
more than once.  
Let $\InitSolMem_{i}$ where $i\leq \ell$ be a previous center that also hits the same optimal cluster. 
It holds that for any $u \in U$, it holds that $d(u, \InitSolSet_{\ell}) \leq 2 d(\InitSolMem_{\ell+1}, \InitSolSet_{\ell}) \leq 2 d(\InitSolMem_{\ell + 1}, \InitSolMem_{i}) \leq 4 \cost(\OptSolSet)$.
The first inequality holds as $\InitSolMem_{\ell+1}$ is a $\frac{1}{2}$-approximate futhest point to $\InitSolSet_{\ell}$, according to Lemma~\ref{lemma:init-sol-mem-lower-bound-burkhardt}.
The second inequality holds as $\InitSolMem_{i}$ is one of the centers in $\InitSolSet_{\ell}$. 
The third inequality holds by triangle inequality and as both $\InitSolMem_{\ell+1}$ and $\InitSolMem_{i}$ are in the same optimal cluster. 
Therefore, $\InitSolSet$ is a progressive $4$-cover for $\Ical$. 
The critical index $\ell^*$ is the $\ell$ we define above, i.e., the largest index such that $\InitSolSet_{\ell}$ hits each part of $\mathbf{\Pi}^*$ at most once.

Next, we prove that $\lambda_{\ell^*} \leq \cost(\OptSolSet)$. 
Since $\InitSolSet_{\ell}$ hits each part of $\mathbf{\Pi}^*$ at most once, it follows that each $\InitSolMem \in \InitSolSet_{\ell}$ can be uniquely mapped to each group $\MultiGroup \in \MultiGroupSet$; it implies $d(\InitSolMem, \MultiGroup) \leq \cost(\OptSolSet)$.
Let us recall the definition of $(\ell,\lambda)$-projection graph $H^{\ell}_{\lambda} = (\InitSolSet_{\ell} \cup \MultiGroupSet, E_{\lambda})$.
If there exists a left perfect matching on $H^{\ell^*}_{\lambda}$, it indicates that for any $\InitSolMem \in \InitSolSet_{\ell^*}$, there exists a $\MultiGroup \in \MultiGroupSet$ such that $d(\InitSolMem, \MultiGroup) \leq \lambda$.
This implies that when $\lambda = \cost(\OptSolSet)$, there exists a left perfect matching on $H^{\ell^*}_{\lambda}$.
Also because that $\lambda_{\ell^*}$ is the minimum possible value to guarantee the existence of a left perfect matching on $H^{\ell^*}_{\lambda_{\ell^*}}$, it follows that $\lambda_{\ell^*} \leq \cost(\OptSolSet)$.
\end{proof}

\monotonicitypropertiesoflambdaell* 
\begin{proof}
    We prove that $\lambda_{\ell}$ is a non-decreasing function in $\ell$ and $\cost(\InitSolSet_{\ell})$ is a non-increasing function in $\ell$ by the way they are defined. 
    Let $\ell < \ell'$.

    Recall $\cost(\InitSolSet_{\ell}) = \max_{u \in U} d(u, \InitSolSet_{\ell}) \geq \max_{u \in U} d(u, \InitSolSet_{\ell'}) = \cost(\InitSolSet_{\ell'})$. 
    The inequality $\geq$ holds as $\InitSolSet_{\ell} \subset \InitSolSet_{\ell'}$, therefore, $d(u, \InitSolSet_{\ell}) \geq d(u, \InitSolSet_{\ell'})$. 
    This implies that $\cost(\InitSolSet_{\ell})$ is a non-increasing function in $\ell$.

    Recall that $H^{\ell}_{\lambda} = (\InitSolSet_{\ell} \cup \MultiGroupSet, E_{\lambda})$, where $E_{\lambda} = \{(\InitSolMem, \MultiGroup) \mid d(\InitSolMem, \MultiGroup) \leq \lambda\}$. 
    $\lambda_{\ell}$ is the minimum $\lambda$ such that there exists a left perfect matching on $H^{\ell}_{\lambda}$.
    Since $\ell < \ell'$, it follows that $\InitSolSet_{\ell} \subset \InitSolSet_{\ell'}$, therefore, whenever there exists a left perfect matching on $H^{\ell'}_{\lambda_{\ell'}}$, there exists a left perfect matching on $H^{\ell}_{\lambda_{\ell'}}$.
    It implies that $\lambda_{\ell} \leq \lambda_{\ell'}$.
\end{proof}

\monotonicityoflambdacost*
\begin{proof}[Proof of Lemma~\ref{lemma:monotonicity-of-lambda-and-cost-init-sol-set}]
    Since by Lemma~\ref{lemma:monotonicity-properties-of-lambda-ell}, $\lambda_{\ell}$ is a non-de\-creas\-ing function and $\cost(\InitSolSet_{\ell})$ is a non-increasing function, w.r.t. $\ell$, we have three cases.
    
    Case (a): When $4 \lambda_{\ell} \geq \cost(\InitSolSet_{\ell})$ for all $\ell \in [k]$, it follows that $4 \lambda_{1} \geq \cost(\InitSolSet_{1})$.
    Furthermore, we have $\lambda_1 \le \lambda_{\ell^*} \le  \cost(\OptSolSet)$, due to~\Cref{lemma:lambda-ell-star-lower-bound-smaller-queries}.

    Case (b): When $4 \lambda_{\ell} \leq \cost(\InitSolSet_{\ell})$ for all $\ell \in [k]$, it follows that $4 \lambda_{k} \leq \cost(\InitSolSet_{k})$. 
    Furthermore, we have $\cost(\InitSolSet_{k}) \leq \cost(\InitSolSet_{\ell^*}) \leq 4\,\cost(\OptSolSet)$,  due to~\Cref{lemma:lambda-ell-star-lower-bound-smaller-queries}.

    Case (c): If the above two cases are false then, there exists $2 \leq \ell' \leq k-1$ such that for any $\ell \leq \ell'$, it holds that $4 \lambda_{\ell} \leq \cost(\InitSolSet_{\ell})$ and for any $\ell \geq \ell' + 1$, it holds that $4 \lambda_{\ell} \geq \cost(\InitSolSet_{\ell})$.
    Next, we show that $\min \{\cost(\InitSolSet_{\ell'}), 4 \lambda_{\ell' + 1}\} \leq 4\,\cost(\OptSolSet)$. From \Cref{lemma:lambda-ell-star-lower-bound-smaller-queries}, we have that for the critical index $\ell^*$, it holds that $\max\{4\lambda_{\ell^*}, \cost(\InitSolSet_{\ell^*})\}\leq 4\, \cost(\OptSolSet)$. First, consider the case when $\ell^* \le \ell'$, and note that since, $ \cost(\InitSolSet_{\ell^*}) \leq 4\, \cost(\OptSolSet)$, we have that $\cost(\InitSolSet_{\ell'}) \le  \cost(\InitSolSet_{\ell^*})\leq 4\, \cost(\OptSolSet)$, as desired. Now, consider the case when $\ell^* \geq \ell'+1$. Again using~\Cref{lemma:lambda-ell-star-lower-bound-smaller-queries}, we have that $4\lambda_{\ell'+1} \leq 4\lambda_{\ell'+1} \leq 4\, \cost(\OptSolSet)$, as required.
\end{proof}

\getfeasiblesolution*
\begin{proof}
    The solution is a feasible solution because $\abs{\SolSet_{\hat{\ell}}} \leq k$ and $\abs{\SolSet_{\hat{\ell}} \cap \MultiGroup_i} \geq \alpha_i$ for all $i \in [k]$.
    The distortion of the algorithm is proved by triangle inequality and Theorem~\ref{lemma:binary-search-on-ell}. 
    By Theorem~\ref{lemma:binary-search-on-ell}, we know that $\lambda_{\hat{\ell}} \leq \cost(\OptSolSet)$ and $\cost(\InitSolSet_{\hat{\ell}}) \leq 4\, \cost(\OptSolSet)$. 
    It follows that $\cost(\SolSet_{\hat{\ell}}) \leq \lambda_{\hat{\ell}} + \cost(\InitSolSet_{\hat{\ell}}) \leq 5\, \cost(\OptSolSet)$ through the left-perfect matching. 
\end{proof}

\constructgraphqueries*
\begin{proof}
    Recall that to construct the edge set $E_{\lambda}$ of $H^{\ell}_{\lambda}$, we need to determine for each center $\SolMem_{i'} \in \InitSolSet_{\ell}$ and each group $\MultiGroup_{j'} \in \MultiGroupSet$ whether $d(\SolMem_{i'}, \MultiGroup_{j'}) \leq \lambda$.
    For any center $\SolMem_{i'} \in \InitSolSet_{\ell}$, we can leverage the fact that the groups in $\MultiGroupSet$ are ordered by their distances to $\SolMem_{i'}$ in non-decreasing order (denoted by $\succ_{\SolMem_{i'}}$). 
    We perform a binary search to find the rightmost group $\MultiGroup_{j^*}$ in this ordering such that $d(\SolMem_{i'}, \MultiGroup_{j^*}) \leq \lambda$.
    This binary search requires at most $\log(k)$ distance queries.
    Once we find $\MultiGroup_{j^*}$, we can determine all edges incident to $\SolMem_{i'}$: there is an edge between $\SolMem_{i'}$ and $\MultiGroup_{j'}$ if and only if $\MultiGroup_{j'} \preceq_{\SolMem_{i'}} \MultiGroup_{j^*}$.
    This is because all groups $\MultiGroup_{j'}$ with $\MultiGroup_{j'} \preceq_{\SolMem_{i'}} \MultiGroup_{j^*}$ have distance at most $\lambda$ (by the ordering property), while all groups $\MultiGroup_{j'}$ with $\MultiGroup_{j'} \succ_{\SolMem_{i'}} \MultiGroup_{j^*}$ have distance strictly greater than $\lambda$.
    Since we perform this procedure for each of the $\ell$ centers in $\InitSolSet_{\ell}$, the total number of distance queries is at most $\ell \log(k)$.
\end{proof}

\evaluatepredicatequeries*
\textsc{Proof of~\Cref{lemma:evaluate-predicate-queries}.}
   For any fixed $\lambda$, it takes at most $\ell \log(k)$ distance queries to construct the bipartite graph $H^{\ell}_{\lambda}$ according to Lemma~\ref{lemma:construct-graph-queries}.
    In addition, $\cost(\InitSolSet_{\ell})$ takes $\ell$ distance queries to compute.
    Checking whether there exists a left-perfect matching on $H^{\ell}_{\lambda}$ does not take additional distance queries. 
\hfill\qed

\reducesearchspace*
\begin{proof}
    The proof closely follows the reasoning in Lemma~\ref{lemma:construct-graph-queries}, with only a slight variation in the binary search condition.  
\end{proof}

\medianmedian* 
\textsc{Proof.}
First, we make an observation on $\pidx$ that $\sum_{i=\pidx}^{k} \abs{D_{\pi_{i}}^{(t)}} \geq \frac{L^{(t)}}{2}$ also holds. 
This is because that $\pidx$ is the smallest $j$ such that $\sum_{i=1}^{j} \abs{D_{\pi_{i}}^{(t)}} \geq \frac{L^{(t)}}{2}$, it follows that $\sum_{i=1}^{\pidx-1} \abs{D_{\pi_{i}}^{(t)}} < \frac{L^{(t)}}{2}$.
Thus $\sum_{i=\pidx}^{k} \abs{D_{\pi_{i}}^{(t)}} = L^{(t)} - \sum_{i=1}^{\pidx-1} \abs{D_{\pi_{i}}^{(t)}} > \frac{L^{(t)}}{2}$.
Next, we make a case distinction and let $\tau = \MED(D_{\pi_{\pidx}}^{(t)})$. 

Case 1: $H^{\ell}_{\tau}$ has a left-perfect matching.
The algorithm records $\tau$ as the current best feasible result. 
All the distance values in $D^{(t)}$ that are equal or greater than $\tau$ are thus removed from the search space. 
Recall that $D_{\pi_i}^{(t)}$ are sorted in non-decreasing order according to the distance value of $\tau$. 
At least $\frac{1}{2} \sum_{i=\pidx}^{k} \abs{D_{\pi_{i}}^{(t)}} \geq \frac{L^{(t)}}{4}$ distance values are removed from the search space.

Case 2: $H^{\ell}_{\tau}$ has no left-perfect matching. 
All the distance values in $D^{(t)}$ that are less than or equal to $\tau$ are removed from the search space.
Thus at least $\frac{1}{2} \sum_{i=1}^{\pidx} \abs{D_{\pi_{i}}^{(t)}} \geq \frac{L^{(t)}}{4}$ distance values are removed from the search space.

Thus, the size of the search space is reduced by at least $\frac{1}{4}$.
\hfill\qed

\end{document}